\documentclass[11pt]{amsart}
\usepackage{graphicx} % Required for inserting images

\usepackage{array}
\usepackage[utf8]{inputenc}
\usepackage{hyperref}

\usepackage{amssymb}
\usepackage{pgfplots}
\usepackage{amsthm}
\usepackage{stmaryrd}
\usepackage{fullpage}
\usepackage{tikz-cd}
\usepackage{todonotes}
\usepackage{amsmath}
\usepackage{amsfonts}
\usepackage{subcaption}

\newtheorem{definition}{Definition}[section]
\newtheorem{theorem}[definition]{Theorem}
\newtheorem{lemma}[definition]{Lemma}
\newtheorem{corollary}[definition]{Corollary}
\newtheorem{proposition}[definition]{Proposition}

\newtheorem{example}{Example}
\newtheorem{remark}{Remark}

\newtheorem{construction}{Construction}

\newcommand{\calC}{\mathcal{C}}

\newcommand{\calF}{\mathcal{F}}
\newcommand{\calM}{\mathcal{M}}

\newcommand{\F}{\mathbb{F}}
\newcommand{\K}{\mathbb{K}}
\newcommand{\C}{\mathbb{C}}
\newcommand{\N}{\mathbb{N}}

\newcommand{\mun}[1]{\boldsymbol{\mu}_{#1}}

\newcommand{\Tr}{\operatorname{Tr}}
\newcommand{\Trqp}{\Tr_{\F_q/\F_p}}

\newcommand{\ev}[1]{\operatorname{ev}_{#1}}

\let\oldexample\example
\renewcommand{\example}{\oldexample\normalfont}

\let\oldremark\remark
\renewcommand{\remark}{\oldremark\normalfont}

\let\oldconstruction\construction
\renewcommand{\construction}{\oldconstruction\normalfont}

\title{Exact Cardinality and Nonredundant Parametrization of Character-Polynomial Codes}
\author{Felice Manganiello and Sitraka Randrianarivo}
\address{School Of Mathematical and Statistical Sciences, Clemson University, Clemson, SC 29634, USA}

\begin{document}

\maketitle

\begin{abstract}
Character-polynomial codes are constructed by evaluating finite field polynomials and mapping the results to complex roots of unity through additive characters. This paper shows that, over extension fields, the original polynomial family may contain redundancies: distinct polynomials can generate the same codeword. We identify the source of this non-injectivity through the trace map and cyclotomic cosets, determine the exact code cardinality, and construct a refined polynomial family that parametrizes the code without redundancy. These results give corrected parameters for CP codes and clarify their algebraic structure.
\end{abstract}

\section{Introduction}

Analog subspace codes were introduced in \cite{soleymani2021analog} as a coding framework for large
non-coherent wireless networks, where many relay nodes amplify and forward
signals without knowing the point-to-point channel gains or the structure of
the network. In such networks, conventional techniques such as channel
estimation, link-level block coding, and successive interference cancellation
do not scale well as the number of small cells, relays, and connected devices
increases. The main idea is to encode information not in individual transmitted
vectors, but in the subspace that those vectors
span over an analog field such
as $\mathbb{R}$ or $\mathbb{C}$. This leads to the notion of an analog
operator channel, which is an analog-domain counterpart of the
Koetter--Kschischang operator channel \cite{koetter2008coding}. In this model, rank deficiency in the
network is represented as subspace erasures, while interference from
neighboring cells is represented as subspace errors. Formally, the channel
input and output are subspaces, and the channel output can be written as
\[
    V = \mathcal{H}_k(U) \oplus E,
\]
where $\mathcal{H}_k(U)$ is a lower-dimensional subspace of the transmitted
subspace $U$, and $E$ is an error or interference space.

The authors of \cite{soleymani2021analog} use this analog operator channel to study communication over
non-coherent wireless networks through subspace codes. Since the transmitter
and receiver do not know the channel matrix, the information that can be
reliably conveyed is the subspace generated by the transmitted signal blocks,
rather than the exact transmitted matrix. A distance based on projection
matrices is defined on the set of subspaces, allowing minimum-distance decoding
to recover the transmitted subspace when the total number of subspace errors
and erasures is sufficiently small. As an explicit construction, the paper
defines character-polynomial (CP) codes. These are one-dimensional complex
Grassmannian codes obtained by evaluating polynomials over a finite field
$\mathbb{F}_q$ and then mapping the finite field values to complex roots of unity using an additive character $\chi$. More precisely, for a family of
polynomials $\mathcal{F}$ and nonzero evaluation points
$\alpha_i \in \mathbb{F}_q$, each polynomial $f \in \mathcal{F}$ defines a
complex vector
\[
    \bigl(\chi(f(\alpha_1)), \chi(f(\alpha_2)), \ldots,
    \chi(f(\alpha_m))\bigr),
\]
and the corresponding codeword is the one-dimensional subspace spanned by this
vector. The resulting CP codes use bounds on character sums to guarantee small
correlations between distinct codewords and therefore good minimum-distance
properties \cite{soleymani2021analog}.

Since their introduction, character-polynomial (CP) codes have mainly been
used as explicit algebraic constructions for Grassmannian packing problems and
as a bridge between finite field coding theory and analog subspace coding.
The classical work \cite{conway1996packing} on Grassmannian packings highlights the importance of the chordal metric for measuring subspace separation. CP codes fit naturally into
this framework, as their induced distances correspond to a scaled version of the
squared chordal metric. One follow-up direction extends the original one-dimensional CP code
construction to higher-dimensional subspaces in Grassmannian space. In this
setting, CP-type constructions are used to build new packings in complex and
real Grassmannians, and they also serve as inner codes in concatenated
Grassmann codes. This shows that the CP construction is not limited to lines in
$G_{1,m}(\mathbb{C})$, but can also be used to obtain higher-dimensional
Grassmann codes with competitive finite-length parameters and favorable
asymptotic behavior \cite{soleymani2021new}. A second line of work
studies CP codes from an algorithmic decoding perspective. Riasat and
Mahdavifar observe that, before applying the character map, a CP code can be
viewed as a structured subcode of a Reed--Solomon code. This connection allows
classical Reed--Solomon decoding tools, including minimum-distance decoding and
list decoding, to be adapted to CP subspace codes \cite{riasat2024decoding}.

More recently, CP codes have also been used beyond the original analog
operator-channel setting, especially as structured codebooks for Grassmannian
quantization in wireless communication. In limited-feedback MISO/MIMO 
precoding, the receiver can quantize channel-state information to a CP codeword
and send the corresponding index to the transmitter. The CP codebook structure
gives a tunable tradeoff between the number of feedback bits and the resulting
beamforming gain, while the equal-magnitude entries of codewords are useful
for equal-gain or per-antenna power constrained transmission
\cite{gooty2025precoding}.
The relevance of Grassmannian packings in this setting goes back to the classical work \cite{love2003grassmannian}, which established Grassmannian line packings as a
natural framework for designing limited--feedback beamforming codebooks.
A related direction studies covering rather
than packing: recent work on Hamming and Grassmann spaces uses the
Reed--Solomon/character viewpoint underlying CP codes to define
character--Reed--Solomon (CRS) codes for Grassmannian quantization and to
analyze their average covering radius \cite{riasat2025covering}. These
follow-up works suggest that CP codes have developed from a construction for
analog subspace coding into a broader algebraic tool for Grassmannian packing,
decoding, precoding, and quantization.

In this paper, we study the
cardinality of CP codes in detail.

\subsection{Main Contribution}

The main contribution of this paper is a precise characterization of the
cardinality of character-polynomial (CP) codes, correcting a
non-injectivity issue in the original construction. 

A character-polynomial (CP) code is a collection of one-dimensional
subspaces of $\mathbb{C}^n$ obtained by evaluating polynomials over a
finite field and then mapping the evaluation values to the complex roots of unity through a fixed nontrivial additive character. More precisely, for a
family of polynomials \[
\mathcal{F}
:=
\left\{
    f(x)=
    \sum_{\substack{i\in\{1,\dots,k-1\}\\ p\: \nmid \: i}}
    f_i x^i
    \ \middle|\ 
    f_i\in\mathbb{F}_q \text{ for all } i
\right\},
\] distinct
evaluation points $\alpha_1,\ldots,\alpha_m\in\mathbb{F}_q$, and a fixed
nontrivial additive character $\chi:\mathbb{F}_q\to\mathbb{C}$, each
polynomial $f\in\mathcal{F}$ gives the complex vector
\[
    \bigl(\chi(f(\alpha_1)),\ldots,\chi(f(\alpha_m))\bigr)\in\mathbb{C}^m.
\]
The CP code is the set of one-dimensional subspaces spanned by these
vectors.
In
\cite{soleymani2021analog}, the cardinality of a CP code is identified with
the cardinality of the polynomial family $\mathcal{F}$. 
The exclusion of monomials with degree divisible by the field characteristic
$p$ is motivated by the fact that, if all polynomials are allowed, then
different polynomials may generate the same codeword. This removes one obvious
source of redundancy in the construction.

Nevertheless, we prove that this condition alone does not eliminate all
redundancies. In particular, there may still exist distinct polynomials
$f,g\in\mathcal{F}$ whose evaluations, after applying the additive
character, span the same one-dimensional subspace of $\mathbb{C}^n$.
Therefore, the map used for defining the CP code construction is not necessarily injective. It follows
that the actual cardinality of the resulting CP code can be strictly smaller
than $|\mathcal{F}|$. 
The loss in cardinality is ultimately caused by the fact that
additive characters of $\mathbb{F}_q$ do not distinguish all
elements of $\mathbb{F}_q$ when $q$ is not prime. More precisely,
let $q=p^r$, where $p$ is the characteristic of $\mathbb{F}_q$, and let $\mu_p$ represent the set of $p$-th roots of unity.
Every additive character of $\mathbb{F}_q$ can be written in the
form
\[
    \chi(x)
    =
    \chi_{p}\bigl(\Tr_{\mathbb{F}_q/\mathbb{F}_p}(ax)\bigr)
\]
for some $a\in\mathbb{F}_q $, where $\chi_p:\mathbb{F}_p\to \C$ is a nontrivial additive
character of the prime field and
$\Tr_{\mathbb{F}_q/\mathbb{F}_p}$ is the field trace function. This
factorization can be represented by the commutative diagram
\begin{figure}[ht]
\centering
\begin{tikzpicture}[
    >=Stealth,
    node distance=2.2cm and 2.8cm,
    every node/.style={font=\normalsize},
    arrow/.style={->, thick}
]
    \node (Fq) at (0,0) {$\F_q$};
    \node (C)  at (5,0) {$\C$};
    \node (Fp) at (2.5,-1.8) {$\F_p$};

    \draw[arrow] (Fq) -- node[above] {$\chi$} (C);

    \draw[arrow] (Fq) -- 
        node[below left] {$x \mapsto \Trqp(ax)$} 
        (Fp);

    \draw[arrow] (Fp) -- 
        node[below right] {$\chi_p$} 
        (C);
\end{tikzpicture}
\caption{The character $\chi$ as the composition of the trace map with $\chi_p$.}
\label{fig:trace-character}
\end{figure}

Thus, since the trace map is onto $\F_p$ when $a\neq 0$, the image of $\chi$ has cardinality
$p$, while its domain has cardinality $q=p^r$. In particular,
for $r>1$, the character $\chi$ is not injective. Consequently, different
finite field evaluations may be mapped to the same complex value on
the unit circle. This non-injectivity induces possible collisions
among codewords: two distinct polynomials may give evaluation
vectors spanning the same one-dimensional complex subspace.
Therefore, the number of distinct codewords can be strictly
smaller than the number of polynomials used to define them.

\begin{example}\label{ex:strict_cardinality}

Let $p=3$, $q=9$, $m=9$ and $k=8$. Consider the additive character
\[
    \chi(\ell)
    =
    e^{\left(
        \frac{2\pi i}{3}
        \Tr_{\F_9/\F_3}(\ell)
    \right)},
    \qquad \ell\in\F_9.
\]
Let $\alpha\in\F_9$ be a primitive element with minimal polynomial
$x^2+2x+2$ over $\F_3$. A direct computation shows that
\[
    \Tr_{\F_9/\F_3}\!\bigl((1+\alpha)\ell^4\bigr)=0,
    \qquad \forall\,\ell\in\F_9.
\]
Therefore, $    \chi\!\bigl((1+\alpha)\ell^4\bigr)=1 $ for $\ell\in\F_9.$
It follows that the polynomial
    $f(x)=(1+\alpha)x^4\in\mathcal F$
produces the same evaluation vector as the zero polynomial
$g(x)=0$, namely $(1,\ldots,1)$. Consequently, $f$ and $g$
define the same codeword 
    $\langle (1,\ldots,1)\rangle$.

A second example is provided by the polynomials 
    $f(x)=x^5$, and $
    g(x)=x^7$. Indeed, one verifies that
    $\chi(\ell^5)=\chi(\ell^7)$ for $\ell\in\F_9$.
Hence, the corresponding evaluation vectors coincide, and therefore
$f$ and $g$ determine the same codeword.

\end{example}

Figure \ref{fig:foobar} shows that the cardinality predicted by
$|\mathcal{F}|$ can overestimate the true size of the resulting CP code, given in Proposition \ref{CP:size&poly}, 
highlighting the need for an exact characterization of the code
cardinality.

\begin{figure}[htb]
    \centering

    \begin{subfigure}[t]{0.45\textwidth}
        \vspace{0pt}
        \centering
        \begin{tabular}{c|c|c}
        \textbf{$r=\log_2 q$} & \textbf{$\log_2|\calF|$} & $\log_{2}|\calC(\calF)|$ \\
        \hline
        1  & 1    & 1    \\
        2  & 4    & 3    \\
        3  & 12   & 7    \\
        4  & 32   & 15   \\
        5  & 80   & 31   \\
        6  & 192  & 63   \\
        7  & 448  & 127  \\
        8  & 1024 & 255  \\
        9  & 2304 & 511  \\
        10 & 5120 & 1023 \\
        \end{tabular}
        \caption{}
    \end{subfigure}
    \hfill
    \begin{subfigure}[t]{0.50\textwidth}
        \vspace{0pt}
        \centering
        \begin{tikzpicture}
        \begin{axis}[
            xlabel={Field size},
            width=\linewidth,
            height=0.72\linewidth,
            xmode=log,
            log basis x=2,
            legend pos=north west,
            ymajorgrids=true,
            grid style=dashed,
            xtick={2, 4, 8, 16, 32, 64, 128, 256, 512, 1024},
            xticklabels={$2^1$,$2^2$,$2^3$,$2^4$,$2^5$,$2^6$,$2^7$,$2^8$,$2^9$,$2^{10}$},
            ytick={0,1000,2000,3000,4000,5000,6000},
yticklabels={$0$, $10^3$, $2\!\cdot\!10^3$, $3\!\cdot\!10^3$,
             $4\!\cdot\!10^3$, $5\!\cdot\!10^3$, $6\!\cdot\!10^3$},
        ]

        \addplot[
            color=blue,
            mark=square,
        ]
        coordinates {
            (2, 1)
            (4, 4)
            (8, 12)
            (16, 32)
            (32, 80)
            (64, 192)
            (128, 448)
            (256, 1024)
            (512, 2304)
            (1024, 5120)
        };

        \addplot[
            color=red,
            mark=square,
        ]
        coordinates {
            (2, 1)
            (4, 3)
            (8, 7)
            (16, 15)
            (32, 31)
            (64, 63)
            (128, 127)
            (256, 255)
            (512, 511)
            (1024, 1023)
        };

        \legend{$\log_2|\calF|$, $\log_2|\calC(\calF)|$}
        \end{axis}
        \end{tikzpicture}
        \caption{}
    \end{subfigure}

    \medskip

    \begin{subfigure}[t]{0.45\textwidth}
        \vspace{0pt}
        \centering
        \begin{tabular}{c|c|c}
        \textbf{$r=\log_3 q$} & $\log_3|\calF|$ & $\log_{3}|\calC(\calF)|$ \\
        \hline
        1  & 2      & 2      \\
        2  & 12     & 8      \\
        3  & 54     & 26     \\
        4  & 216    & 80     \\
        5  & 810    & 242    \\
        6  & 2916   & 728    \\
        7  & 10206  & 2186   \\
        8  & 34992  & 6560   \\
        9  & 118098 & 19682  \\
        10 & 393660 & 59048  \\
        \end{tabular}
        \caption{}
    \end{subfigure}
    \hfill
    \begin{subfigure}[t]{0.50\textwidth}
        \vspace{0pt}
        \centering
        \begin{tikzpicture}
        \begin{axis}[
            xlabel={Field size},
            scaled y ticks=false,
            width=\linewidth,
            height=0.72\linewidth,
            xmode=log,
            log basis x=3,
            legend pos=north west,
            ymajorgrids=true,
            grid style=dashed,
            xtick={3, 9, 27, 81, 243, 729, 2187, 6561, 19683, 59049},
            xticklabels={$3^1$,$3^2$,$3^3$,$3^4$,$3^5$,$3^6$,$3^7$,$3^8$,$3^9$,$3^{10}$},
            ytick={0,100000,200000,300000,400000},
yticklabels={$0$, $10^5$, $2\!\cdot\!10^5$, $3\!\cdot\!10^5$,
             $4\!\cdot\!10^5$},
        ]

        \addplot[
            color=blue,
            mark=square,
        ]
        coordinates {
            (3, 2)
            (9, 12)
            (27, 54)
            (81, 216)
            (243, 810)
            (729, 2916)
            (2187, 10206)
            (6561, 34992)
            (19683, 118098)
            (59049, 393660)
        };

        \addplot[
            color=red,
            mark=square,
        ]
        coordinates {
            (3, 2)
            (9, 8)
            (27, 26)
            (81, 80)
            (243, 242)
            (729, 728)
            (2187, 2186)
            (6561, 6560)
            (19683, 19682)
            (59049, 59048)
        };

        \legend{$\log_3|\calF|$, $\log_3|\calC(\calF)|$}
        \end{axis}
        \end{tikzpicture}
        \caption{}
    \end{subfigure}

    \caption{The effect of this cardinality loss can be seen in these plots
where the size of the defining polynomial family $\mathcal{F}$ is
compared with the actual cardinality of the CP code. We consider
the cases $p=2$ and $p=3$, take $q=p^r$ for $r=1,\ldots,10$,
and set $k=m=q$.  Subfigures (a) and (b) correspond to $p=2$; subfigures (c) and (d) correspond to $p=3$.}
    \label{fig:foobar}
\end{figure}

After reviewing the necessary preliminaries in
Section~\ref{sec:prelim}, we study the cardinality of CP codes.
Since the loss of cardinality originates from the interaction between
additive characters and the trace map, Section~\ref{sec:cardinality}
determines the cardinality of the image of the trace map. This analysis is
then used in Section~\ref{sec:trace-polynomials} to characterize the set of
polynomials that are uniquely represented through their trace evaluations.
Finally, in Section~\ref{sec:CP-polynomials}, we exploit this characterization
to construct a refined family of defining polynomials that yields a one-to-one
correspondence between polynomials and codewords, thereby providing a unique
parametrization of the code.

\subsection{Result Impact}

The refined cardinality formula also clarifies how CP code parameters should
be interpreted in later applications of the construction. Since the rate of a
subspace code depends on the number of distinct subspaces rather than on
the number of polynomials used to generate them, any use of CP codes over
extension fields should distinguish between the defining polynomial family and
the resulting set of codewords. This distinction primarily affects
cardinality-dependent quantities, such as rate and rate--distance
comparisons.

This observation is relevant, for example, to the CP-based Grassmannian
packing constructions in \cite{soleymani2021new}. For parameter
choices over non-prime fields, the size of the resulting packing should be
computed as the number of distinct subspaces produced by the construction. A similar remark applies in the higher-dimensional variants when compared with \cite{calderbank1999group}. In decoding applications such as
\cite{riasat2024decoding}, the same issue can be addressed by fixing a
refined polynomial family that parametrizes codewords uniquely; then
decoding polynomials and decoding codewords become equivalent again. Under
parameter regimes where the original and refined polynomial families coincide,
the existing simulations and conclusions are unaffected. Finally, works that
mainly use CP codes over prime fields, such as
\cite{gooty2025precoding}, are not expected to be impacted by the
extension-field non-injectivity considered here.

\section{Preliminaries}\label{sec:prelim}

\subsection{Additive Characters}
In this paper, $\F_q$ denotes the finite field of size $q=p^r$, a prime power for some $r \in \mathbb{N}$. Formally, the field trace function from $\F_q$ to $\F_p$ is
\[\Trqp(a)=\sum_{j=0}^{r-1}a^{p^j}.\]
The trace function, from $\F_q$ to $\F_p$, is a surjective $\F_p$-linear map, meaning that for any $a,b\in\F_q$ and $\lambda_1,\lambda_2\in\F_p$
 \[\Trqp(\lambda_1 a+\lambda_2 b)=\lambda_1\Trqp(a)+\lambda_2\Trqp(b).\]

Let us now introduce the notion of character of a finite abelian group. For background on characters of
finite fields, see Chapter 5 in \cite{lidl1997finite} for more details on the topic. For a given $n \in \N$, 
\[\mun{n}=\left\{\left.\omega^l=e^{\frac{2i\pi}{n}l}\;\right|\; l=0,\dots,n-1\right\}\] 
denotes the set of the $n$-th complex roots of unity. 

\begin{definition}
    An additive character of $\F_q$ is a group homomorphism $\chi:\F_q \to \mun{p}$ from $\F_q$ as an additive group into the
multiplicative group of complex roots of unity. 

\end{definition} 

The following theorem gives a complete characterization of the additive characters of a finite field.

\begin{theorem}[Theorem 5.7 \cite{lidl1997finite}] \label{add_char_def}
    If $\chi:\F_q \to \mun{p}$ is an additive character of $\F_q$, then there exists $a \in \F_q$ such that $$\chi(\ell)=e^ {\left( \frac{2i\pi}{p} \Trqp (a\ell) \right)}.
    $$
    We say that $\chi$ is nontrivial if $a \neq 0$.
\end{theorem}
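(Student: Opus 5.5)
The approach is the standard one: pass through the structure of the dual group of $(\F_q,+)$. First, for each $a\in\F_q$ define $\chi_a(\ell)=e^{\frac{2i\pi}{p}\Trqp(a\ell)}$. Since $\Trqp$ is $\F_p$-linear, in particular additive, and $x\mapsto e^{2i\pi x/p}$ is a homomorphism from $(\F_p,+)$ to $\mun{p}$, each $\chi_a$ is an additive character. Moreover $\chi_a\chi_b=\chi_{a+b}$, so $a\mapsto\chi_a$ is a group homomorphism from $(\F_q,+)$ into the group $\widehat{\F_q}$ of all additive characters.

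Next, I show that $a\mapsto\chi_a$ is injective. If $\chi_a$ is trivial with $a\neq 0$, then $\Trqp(a\ell)=0$ for all $\ell\in\F_q$. As $\ell$ ranges over $\F_q$, so does $a\ell$, which would make $\Trqp$ identically zero. This contradicts the surjectivity of the trace recalled above; alternatively, $\Trqp$ is a nonzero polynomial of degree $p^{r-1}<q$, so it cannot vanish on all of $\F_q$. Hence the kernel of $a\mapsto\chi_a$ is trivial, and there are at least $q$ distinct characters $\chi_a$.

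The last step is the counting bound $|\widehat{\F_q}|\le q$, which I expect to be the only genuine obstacle. Since $(\F_q,+)\cong(\Z/p\Z)^r$, pick an $\F_p$-basis $e_1,\dots,e_r$ of $\F_q$. Any additive character $\chi$ is determined by the values $\chi(e_1),\dots,\chi(e_r)$. Each $\chi(e_j)$ satisfies $\chi(e_j)^p=\chi(pe_j)=\chi(0)=1$, so it lies in $\mun{p}$. This gives at most $p^r=q$ characters, consistent with the codomain $\mun{p}$ in the definition. Combined with the injectivity step, the map $a\mapsto\chi_a$ is a bijection onto $\widehat{\F_q}$, so every additive character has the stated form for a unique $a$. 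The character is trivial exactly when $a=0$, which justifies calling $\chi$ nontrivial when $a\neq0$.
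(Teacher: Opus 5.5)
Your argument is correct and is essentially the standard proof. The paper gives no proof of its own and only cites Lidl--Niederreiter, whose argument is the same as yours: the $q$ maps $\ell\mapsto e^{\frac{2i\pi}{p}\Trqp(a\ell)}$ are pairwise distinct characters because the trace is not identically zero, and counting shows they exhaust all characters. The reference gets the count from the general fact that a finite abelian group $G$ has exactly $|G|$ characters, whereas you prove the needed bound $|\widehat{\F_q}|\le q$ directly from an $\F_p$-basis, which keeps your proof self-contained.
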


This theorem is represented graphically with the commutative diagram in Figure \ref{fig:trace-character}.

\subsection{CP Codes}

Character-polynomial are originated from finite-field evaluation codes.

\begin{definition}[\cite{soleymani2021analog}] \label{def:cp_code}
    Let $q=p^r$, $\chi$ a nontrivial additive character of $\F_{q}$,  $k\in \N$ with $k\leq q$, and
    $$\calF:=\left\{ f=\sum_{\substack{i\in \{1,\dots,k-1\}\\p\:\nmid\: i}} f_ix^i \; \middle | \; f_i \in \F_q\right\}.$$
    If $\F_q=\{a_1,\dots,a_q\}$, a Character Polynomial (CP) code is  
    $$\calC(\calF) = \left\{\langle(\chi(f(a_1)),\dots,\chi(f(a_q)))\rangle_\C\mid f\in \calF\right\}.$$
\end{definition}

Throughout the rest of the paper, we use the canonical nontrivial character
$\chi(x)=\chi_p\!\left(\operatorname{Tr}_{\mathbb{F}_q/\mathbb{F}_p}(x)\right).$ This assumption entails no loss of generality for either the cardinality of the
code or the polynomial classification problem.

Let $G_{1,m}(\C)$ denote the set of one-dimensional subspaces of $\C^m$ and 
\begin{align*}
    \ev{\Trqp}: \F_q[x]^{<k} & \longrightarrow   \F_p^q \\
            f & \longmapsto (\Trqp(f(a_i)))_{i=0,\dots,q-1},
\end{align*}
the trace evaluation with $a_i=\alpha^i$ for $i=0,\dots,q-2$ with $\alpha \in \F_q$ a primitive element and $a_{q-1}=0$.
A CP code is obtained from the
following composition,
\begin{figure}[htb]
\centering
\begin{tikzpicture}[
    >=Stealth,
    every node/.style={font=\normalsize},
    arrow/.style={->}
]
    \node (A) at (0,0) {$\calF$};
    \node (B) at (3.2,0) {$\F_p^q$};
    \node (C) at (5.9,0) {$\mun{p}^q$};
    \node (D) at (9.0,0) {$G_{1,q}(\C)$};

    \draw[arrow] (A) -- node[above] {$\ev{\Trqp}$} (B);
    \draw[arrow] (B) -- node[above] {$\chi_p$} (C);
    \draw[arrow] (C) -- node[above] {$\langle \cdot \rangle_\C$} (D);
\end{tikzpicture}
\caption{The composition from  polynomials over $\F_q$ to points of $G_{1,m}(\C)$.}
\label{fig:trace-evaluation-grassmannian}
\end{figure}
where $\langle \cdot \rangle_\C$ denotes the span operator and, by abuse of notation, $\chi_p$ also denotes the map obtained by applying the additive character $\chi_p:\F_p \to \mu_p$ entrywise to vectors in $\F_p^q$.

\subsection{Cyclotomic Cosets}
In Section \ref{sec:cardinality}  we work with cyclotomic cosets. We review their definition and basic properties here. 

\begin{definition}
    For $l \in \{ 0,\dots,q-2 \}$, the $p$-cyclotomic coset modulo $(q-1)$ of $l$ is 
$$C(l):=\{ lp^i \mod (q-1) \mid i=0,\dots,r-1\}.$$
Moreover, we define $C(q-1):=\{ q-1 \}$.
\end{definition}

Note that the $p$-cyclotomic cosets modulo (q-1) are the equivalence classes of the following equivalence relation
\[l\sim j \Longleftrightarrow j=lp^i\mod (q-1)\] for some $i$. Moreover, $C(q-1)$ is not meaningful a priori, since $p$-cyclotomic cosets are defined modulo $(q-1)$, we therefore extend the definition to $C(q-1)$. 

The following holds.

\begin{lemma}
    
    For $l,j \in \{ 0,\dots,q-2 \}$, we have \[C(l)=C(j) \iff l \in C(j).\] Also, the sets $C(l)$ for $l\in\{0,\dots,q-2\}$ form a partition of $\mathbb{Z}_{q-1}$.
\end{lemma}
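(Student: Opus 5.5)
The plan is to show that $\sim$ is an equivalence relation on $\mathbb{Z}_{q-1}$ whose classes are exactly the sets $C(l)$. Both claims then follow from the general fact that the classes of an equivalence relation partition the underlying set.

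\textbf{Step 1: $\sim$ is an equivalence relation.} The key observation is that $p^r=q\equiv 1 \pmod{q-1}$, so multiplication by $p$ is an invertible map on $\mathbb{Z}_{q-1}$ of order dividing $r$.
\begin{itemize}
\item Reflexivity: $l = lp^0$.
\item Symmetry: if $j\equiv lp^i$ with $0\le i\le r-1$, multiply by $p^{r-i}$ to get $jp^{r-i}\equiv lp^r\equiv l \pmod{q-1}$. Reducing $r-i$ modulo $r$ gives an exponent in $\{0,\dots,r-1\}$, so $j\sim l$.
\item Transitivity: if $j\equiv lp^i$ and $h\equiv jp^{s}$, then $h\equiv lp^{i+s}\equiv lp^{(i+s)\bmod r}$, again using $p^r\equiv 1$.
\end{itemize}

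\textbf{Step 2: the classes are the cosets.} By definition, $C(l)=\{j : l\sim j\}$, which is the equivalence class of $l$.

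\textbf{Step 3: the equivalence $C(l)=C(j)\iff l\in C(j)$.}
\begin{itemize}
\item If $C(l)=C(j)$, then $l\in C(l)=C(j)$ by reflexivity.
\item Conversely, if $l\in C(j)$, then $j\sim l$. For any $h\in C(l)$ we have $l\sim h$, so $j\sim h$ by transitivity, giving $C(l)\subseteq C(j)$. By symmetry $l\sim j$, and the same argument gives $C(j)\subseteq C(l)$.
\end{itemize}

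\textbf{Step 4: the partition.} Every $l\in\{0,\dots,q-2\}$ lies in $C(l)$, so the cosets cover $\mathbb{Z}_{q-1}$. If $h\in C(l)\cap C(j)$, then applying Step 3 twice gives $C(h)=C(l)$ and $C(h)=C(j)$, so $C(l)=C(j)$. Hence distinct cosets are disjoint.

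The only step needing care is symmetry. The exponents are restricted to $\{0,\dots,r-1\}$, so inverting multiplication by $p^i$ must use the congruence $p^r\equiv 1 \pmod{q-1}$ rather than a negative exponent.
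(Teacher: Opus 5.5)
Your proof is correct and takes essentially the same route as the paper, which gives no separate proof and simply notes that the $C(l)$ are the equivalence classes of $l\sim j \iff j\equiv lp^i \pmod{q-1}$, so the lemma follows. The paper asserts that $\sim$ is an equivalence relation without checking it; your verification via $p^r\equiv 1 \pmod{q-1}$ supplies exactly that missing step, including symmetry with exponents kept in $\{0,\dots,r-1\}$.
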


\begin{example} \label{cyclo_example}
    Let $p=3$ and $q=9$. We have that: $C(0)=\{0\}$, $C(1)=C(3)=\{ 1,3 \}$, $C(2)=C(6)=\{ 2,6 \}$, $C(4)=\{4\}$, and $C(5)=C(7)=\{ 5,7 \}$.
\end{example}

Subfields of $\F_q$ arise from cyclotomic cosets.

\begin{lemma} \label{cyclo_field}
    Let $\alpha \in \F_q$ be a primitive element and $l \in \{ 0,\dots,q-1 \}$. We have
    $$\K_{l} = \langle  (\alpha^l)^i \mid i=0,\dots,q-2 \rangle_{\F_p} \subseteq \F_{q}$$
    is a subfield with cardinality $p^{|C(l)|}$. Moreover, if $j \in C(l)$ we have
    $$\K_l=\K_j.$$
\end{lemma}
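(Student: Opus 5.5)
The plan is to identify $\K_l$ with the subfield $\F_p(\alpha^l)$ and then compute the degree of $\alpha^l$ over $\F_p$ using the Frobenius map. Set $\beta=\alpha^l$ for $l\in\{0,\dots,q-2\}$; the case $l=q-1$ gives $\beta=\alpha^{q-1}=1$, and it matches the case $l=0$ since $|C(q-1)|=|C(0)|=1$. By definition, $\K_l$ is the $\F_p$-span of the powers $\beta^i$, $i=0,\dots,q-2$. Because $\beta^{q-1}=1$, this set of powers is closed under multiplication and contains $1$. Hence the span is exactly the ring $\F_p[\beta]$.

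The first step is to show this ring is a subfield. $\F_p[\beta]$ is a finite integral domain inside the field $\F_q$, so it is a field. Equivalently, $\beta^{-1}=\beta^{q-2}$ already lies in the span. Thus $\K_l=\F_p(\beta)$, and $|\K_l|=p^d$ with $d=[\F_p(\beta):\F_p]$.

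The main step is to show $d=|C(l)|$. The minimal polynomial of $\beta$ over $\F_p$ has as its roots the distinct Galois conjugates $\beta^{p^i}$, $i=0,\dots,r-1$. So $d$ equals the number of distinct elements among $\alpha^{lp^i}$. Since $\alpha$ is primitive of order $q-1$, we have $\alpha^{lp^i}=\alpha^{lp^j}$ if and only if $lp^i\equiv lp^j \pmod{q-1}$. Therefore the number of distinct conjugates is exactly $|C(l)|$. An alternative route uses the fact that $d$ is the least $d\ge1$ with $\beta^{p^d}=\beta$, i.e.\ $lp^d\equiv l\pmod{q-1}$. This is also the orbit length of $l$ under multiplication by $p$, which is $|C(l)|$.

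Finally, suppose $j\in C(l)$, so $j\equiv lp^i \pmod{q-1}$. Then $\alpha^j=(\alpha^l)^{p^i}$ lies in $\F_p(\alpha^l)$, which gives $\K_j\subseteq\K_l$. By the preceding lemma, $C(j)=C(l)$, so both fields have the same cardinality $p^{|C(l)|}$, and hence they are equal. Alternatively, one can note that $p$ is invertible modulo $q-1$, so $l$ is also a $p$-power multiple of $j$, and apply the same inclusion in the other direction.

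The only point that needs care is the step identifying the degree $d$ with $|C(l)|$, that is, the standard fact that the conjugates of $\beta$ are exactly its distinct Frobenius images. This is routine finite field theory (Lidl–Niederreiter, Chapter 2), so I do not expect a real obstacle.
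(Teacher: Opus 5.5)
Your proof is correct and follows essentially the same route as the paper: you show the span is closed under multiplication (via $\beta^{q-1}=1$, so it equals $\F_p[\beta]$), and you get its dimension as the degree of the minimal polynomial of $\alpha^l$, whose roots are the distinct conjugates $\alpha^{lp^i}$ indexed by $C(l)$. You also derive $\K_j\subseteq\K_l$ from $\alpha^j=(\alpha^l)^{p^i}$, and the only cosmetic difference is that you mainly close the reverse inclusion by comparing cardinalities, whereas the paper swaps the roles of $l$ and $j$ (which you also mention as an alternative).
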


\begin{proof}
The set $\K_l$ contains $1$, is finite and is closed under addition, so it remains to prove that it is closed under multiplication to show that it is a field. Let $\sum_{i=0}^{q-2} \lambda_{i,1}(\alpha^l)^i \in \K_l$ and $\sum_{s=0}^{q-2} \lambda_{s,2} (\alpha^l)^s \in \K_l$, we have
    $$
    \left( \sum_{i=0}^{q-2} \lambda_{i,1}(\alpha^l)^i \right) \left( \sum_{s=0}^{q-2} \lambda_{s,2} (\alpha^l)^s \right) = \sum_{n=0}^{q-2} \left(\sum_{i+s=n \pmod{q-1}} \lambda_{i,1} \lambda_{s,2} \right) (\alpha^l)^n \in \K_l
    .$$
    Now let $j \in C(l)$, then, for some $t \in \{ 0,\dots,r-1 \}$,  
    $j = lp^t \mod (q-1)$ and \[\alpha^j \in \{ (\alpha^l)^i \mid i=0,\dots,q-2 \}.\]  Since this generating set is closed under multiplication, we have 
    $$\{ (\alpha^j)^i \mid i=0,\dots,q-2 \} \subset \{ (\alpha^l)^i \mid i=0,\dots,q-2 \} \implies \K_j \subseteq \K_l.$$
    The other direction comes from $l \in C(j)$ by the same argument. Hence $\K_l=\K_j$.
    
    It remains to prove that 
    $$|\K_l| = p^{|C(l)|}.$$    Let $m_{\alpha^l}$ be the minimal polynomial of $\alpha^l$ over $\F_p$, then it holds that
    $$
    \dim_{\F_p} \langle (\alpha^l)^j \mid j=0,\dots,q-2 \rangle_{\F_p} = \deg(m_{\alpha^l}).
    $$
    For $l \neq q-1$, if $d$ is minimal such that $(\alpha^l)^{p^d} = \alpha^l$, we have
    $$
    m_{\alpha^l} = \prod_{j=0}^{d-1} (x-\alpha^{lp^j}) = \prod_{i \in C(l)} (x-\alpha^i) \implies \deg(m_{\alpha^l})=|C(l)|.
    $$
    If $l=q-1$, then 
    $$m_{\alpha^{q-1}}=x-1 \implies \deg(m_{\alpha^{q-1}})=1=|C(q-1)|.$$
    Therefore, we have 
    $$\operatorname{dim}_{\F_p} \langle (\alpha^l)^i \mid i=0,\dots,q-2 \rangle_{\F_p} = |C(l)|,$$
    and
    $$|\K_l| = p^{|C(l)|}.$$

\end{proof}

\begin{example}
    Let $p=3$ and $q=9$. 
    \begin{itemize}
        \item If $l=2$, we have
        $\K_2 = \langle 1,\alpha^2,\alpha^4,\alpha^6 \rangle_{\F_3} = \F_9.$
        \item If $l=4$, we have
        $\K_4 = \langle 1,\alpha^4 \rangle_{\F_3}= \langle1,2\rangle_{\F_3} = \F_3.$
        \item If $l=6$ we have
        $\K_6 = \langle 1,\alpha^6,\alpha^4,\alpha^2 \rangle_{\F_3} = \F_9 = \K_2.$
    \end{itemize}
\end{example}

We now prove one of the key
ingredients of this work that is used throughout the manuscript, which is a relation between the evaluation of $\Trqp(ax^l)$ on $\F_q$ and the evaluation of $\Trqp(ax)$ on $\K_l$ for some $a \in \F_q$.

\begin{proposition}\label{prop:tr_zero}
    Let $a \in \F_q$ and $l \in \{ 0,\dots,q-1 \}$. We have, 
    $$\Trqp(ab^l)=0,\ \forall b \in \F_q \iff \Trqp(ac)=0,\ \forall c \in \K_l$$
\end{proposition}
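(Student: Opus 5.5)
The plan is to use the $\F_p$-linearity of the trace. The key observation is that $\K_l$ is by definition the $\F_p$-span of the set $\{(\alpha^l)^i \mid i=0,\dots,q-2\}$. This set is exactly the set of $l$-th powers of the nonzero elements of $\F_q$, since every $b\in\F_q^*$ is $\alpha^i$ for some $i$ and then $b^l=(\alpha^l)^i$. Together with $b=0$, which gives $b^l=0$ (or $b^0=1$ when $l=0$, a degenerate convention to treat separately), this shows that $\{b^l \mid b\in\F_q\}$ is contained in $\K_l$ and spans it over $\F_p$.

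For the direction ($\Leftarrow$), assume $\Trqp(ac)=0$ for all $c\in\K_l$. For any $b\in\F_q$ we have $b^l\in\K_l$, as just noted, so $\Trqp(ab^l)=0$ immediately.

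For the direction ($\Rightarrow$), assume $\Trqp(ab^l)=0$ for all $b\in\F_q$. Take any $c\in\K_l$ and write it as $c=\sum_{i=0}^{q-2}\lambda_i(\alpha^l)^i$ with $\lambda_i\in\F_p$. Then
\[
\Trqp(ac)=\sum_{i}\lambda_i\Trqp\bigl(a(\alpha^i)^l\bigr)=0,
\]
using $\F_p$-linearity of the trace and applying the hypothesis with $b=\alpha^i$.

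The only step needing care is the bookkeeping of the edge cases. When $l=q-1$, we have $\alpha^{q-1}=1$ and $\K_{q-1}=\F_p$, while $b^{q-1}\in\{0,1\}$, so the span is again $\F_p$. When $l=0$, the interpretation of $0^0$ does not matter, because $1=(\alpha^0)^0$ is already among the generators. The argument goes through verbatim in both cases, so I expect no real obstacle beyond stating these conventions explicitly.
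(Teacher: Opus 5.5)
Your proposal is correct and follows essentially the same argument as the paper. The forward direction applies $\F_p$-linearity of $\Trqp$ to the generators $(\alpha^l)^i=(\alpha^i)^l$ of $\K_l$, and the reverse direction uses $\{b^l \mid b\in\F_q\}\subseteq\K_l$; your explicit handling of $l=0$ and $l=q-1$ is slightly more careful than the paper, which simply calls $l=0$ immediate.
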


This proof is a direct consequence of the definition of $\K_l$ and the $\F_p$-linearity of the $\Trqp$.

\begin{proof}
    The case $l=0$ is immediate. Now for $l \in \{ 1,\dots,q-1 \}$, suppose $\Trqp(ab^l)=0$ for all $b\in\F_q$. This means, if $\alpha \in \F_q$ is primitive, we have
    $$\Trqp(a(\alpha^i)^l)=0,\ \forall i=0,\dots,q-2 \iff \Trqp(ab)=0,\ \forall b \in \{ (\alpha^i)^l \mid i=0,\dots,q-2\}.$$
    By the $\F_p$-linearity of $\Trqp$ we have 
    $$ \Trqp(ac)=0,\ \forall c \in \langle\{ (\alpha^i)^l \mid i=0,\dots,q-2\}\rangle_{\F_p}=\K_l.$$
    Now, if $\Trqp(ac)=0$ for all $c\in \K_l$, since $\{ b^l \mid b\in \F_q\}\subseteq \K_l$,  then 
    $\Trqp(ab^l)=0$ for all $b\in \F_q$.
    \end{proof}

\section{Cardinality of the Trace Evaluation Map} \label{sec:cardinality}

    In this section, we focus on the image cardinality of the composition 
    \[ \begin{tikzcd}[ampersand replacement=\&, column sep=large, row sep=large] \F_q[x]^{<k} \ar[r, "\ev{\Trqp}"] \& \F_p^q \ar[r, "\chi_p"] \& \mun{p}^q  \end{tikzcd}. \]
    This composition differs from Figure \ref{fig:trace-evaluation-grassmannian} in that we allow all polynomials of degree
less than $k$, rather than only the
defining family of CP codes. 
    We leave the last part of the composition to Section \ref{sec:CP-polynomials}.
    Since $\chi_p:\F_p^q \to \mun{p}^q$ is bijective, we focus on the image of $\ev{\Trqp}$. 
    
    We begin by studying the image of monomials of a fixed degree and describing the structure of the subspaces they generate. More precisely, we compute the
dimension of each image as an $\F_p$-vector space. We then consider collections of degrees 
    organized according to their $p$-cyclotomic cosets modulo $(q-1)$, and we show that the corresponding families of monomials have linearly independent images and that the image of their sum decomposes as the direct sum of the individual images. Finally, combining these two ingredients, we derive an explicit formula for the dimension of $\ev{\Trqp}(\F_q[x]^{<k})$.
    
    We need the following definition.
    
\begin{definition}
    For $l \in \{ 0,\dots,q-1 \}$ denote by $\calM_l$ the set of all monomials of degree $l$ including the zero polynomial, meaning
    \[\calM_l=\{ax^l \mid a \in \F_q\}.\]
\end{definition}
We include the zero polynomial so that $\calM_l$ is  a subspace of $\F_q[x]$. Let us first give a theorem characterizing all $\F_p$-linear transformations from $\F_q$ to $\F_p$.

\begin{theorem}[Theorem 2.24 \cite{lidl1997finite}] \label{isom_Tr}
    The map 
    \begin{align*}
           \phi:    \F_q & \longrightarrow \operatorname{Hom}_{\F_p}(\F_q,\F_p)\\
                     a   & \longmapsto  \phi_a 
        \end{align*}
    is an isomorphism of $\F_p$-vector spaces, where $\phi_a(x)=\Trqp(ax)$ for all $x \in \F_q$.
\end{theorem}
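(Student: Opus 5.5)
The plan is to prove the three required properties of $\phi$ in turn: it is well defined, it is $\F_p$-linear, and it is injective. Bijectivity will then follow by comparing dimensions.

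First, $\phi$ lands in $\operatorname{Hom}_{\F_p}(\F_q,\F_p)$. For fixed $a\in\F_q$, the map $x\mapsto ax$ is $\F_p$-linear on $\F_q$, and $\Trqp$ is $\F_p$-linear with values in $\F_p$, as recalled in Section~\ref{sec:prelim}. Hence their composition $\phi_a$ is an $\F_p$-linear functional. Linearity of $\phi$ itself follows from
\[\phi_{\lambda_1 a+\lambda_2 b}(x)=\Trqp(\lambda_1 ax+\lambda_2 bx)=\lambda_1\phi_a(x)+\lambda_2\phi_b(x)\]
for $\lambda_1,\lambda_2\in\F_p$, again by $\F_p$-linearity of the trace.

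Next, we show the kernel of $\phi$ is trivial; this is the only nonroutine step. Suppose $a\neq 0$ and $\phi_a=0$. Then $\Trqp(ax)=0$ for all $x\in\F_q$, and since $x\mapsto ax$ is a bijection of $\F_q$, this gives $\Trqp(y)=0$ for all $y\in\F_q$. That contradicts the surjectivity of $\Trqp$ onto $\F_p$ stated in the preliminaries. If one wants this surjectivity to be self-contained, note that $\Trqp(y)=\sum_{j=0}^{r-1}y^{p^j}$ is a nonzero polynomial of degree $p^{r-1}<q$. It therefore has at most $p^{r-1}$ roots, so it cannot vanish on all $q$ elements of $\F_q$. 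Its image is then a nonzero $\F_p$-subspace of $\F_p$, hence all of $\F_p$.

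Finally, $\F_q$ has dimension $r$ over $\F_p$, and $\operatorname{Hom}_{\F_p}(\F_q,\F_p)$ is the dual space, which also has dimension $r$. An injective linear map between $\F_p$-vector spaces of the same finite dimension is an isomorphism, which completes the proof.
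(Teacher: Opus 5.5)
Your proof is correct and complete. It is essentially the standard argument behind the result, which the paper cites without proof (Lidl--Niederreiter, Theorem 2.24): there too, distinct $a$ give distinct functionals because $\Trqp$ is onto $\F_p$ (shown by the same root count for the trace polynomial of degree $p^{r-1}<q$), and bijectivity follows by comparing with the $p^r$ linear functionals on $\F_q$, which you phrase as $\dim_{\F_p}\operatorname{Hom}_{\F_p}(\F_q,\F_p)=r$ rather than counting functionals by their values on a basis.
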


We can prove the following properties.

\begin{proposition} \label{size}
    The following holds:
    \begin{enumerate}
        \item $\dim_{\F_p}\ev{\Trqp}(\calM_l)=|C(l)|$ for $l\in\{0,\dots,q-1\}$\label{dimTM_l}
        \item $\ev{\Trqp}(\calM_j)=\ev{\Trqp}(\calM_l)$ if and only if $j\in C(l)$. \label{TM_l=TM_j}

    \end{enumerate}
\end{proposition}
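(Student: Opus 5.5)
For part (1), I would compute the kernel of the $\F_p$-linear map $\psi_l:\F_q\to\F_p^q$, $a\mapsto \ev{\Trqp}(ax^l)$, whose image is exactly $\ev{\Trqp}(\calM_l)$. By Proposition~\ref{prop:tr_zero}, $a\in\ker\psi_l$ if and only if $\Trqp(ac)=0$ for all $c\in\K_l$. Thus $\ker\psi_l$ is the annihilator $\K_l^{\perp}$ of $\K_l$ under the nondegenerate bilinear form $(a,c)\mapsto\Trqp(ac)$. Nondegeneracy follows from Theorem~\ref{isom_Tr}, since $a\mapsto\phi_a$ is injective. Hence $\dim_{\F_p}\ker\psi_l=r-\dim_{\F_p}\K_l$. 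By Lemma~\ref{cyclo_field}, $\dim_{\F_p}\K_l=|C(l)|$, so $\dim\operatorname{im}\psi_l=|C(l)|$. The case $l=q-1$ needs no separate treatment, since the lemma and the proposition both cover $l\in\{0,\dots,q-1\}$. One small point to check is the case $l=0$ for $\K_0$: here $(\alpha^0)^i=1$, so $\K_0=\F_p$ and $|C(0)|=1$, which is consistent.

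For the forward direction of (2), take $j\in C(l)$, so $j\equiv lp^t \pmod{q-1}$. The extension convention needs care: if $l=q-1$ then $C(l)=\{q-1\}$ and the claim is trivial. Otherwise $j,l\in\{0,\dots,q-2\}$, and for $b\neq0$ we have $b^j=(b^l)^{p^t}$, while at $b=0$ both sides equal $0$ when $l,j\ge1$. The case $l=0$ forces $j=0$. Using the Frobenius invariance of the trace, $\Trqp(ab^j)=\Trqp\big((a^{p^{-t}}b^l)^{p^t}\big)=\Trqp(a^{p^{r-t}}b^l)$. Therefore $\psi_j(a)=\psi_l(a^{p^{r-t}})$, and since Frobenius is a bijection the two images coincide.

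The converse of (2) is the main obstacle. Suppose the images are equal. I would examine the coordinates at $b=\alpha^i$, $i=0,\dots,q-2$. The space $\ev{\Trqp}(\calM_l)$ restricted to these coordinates consists of vectors $(\Trqp(a\alpha^{il}))_i$. Expanding $\Trqp(a\alpha^{il})=\sum_s a^{p^s}\alpha^{ilp^s}$ shows that each such vector, viewed as a function on $\mathbb Z_{q-1}$ (via $i\mapsto$ value), has a discrete Fourier expansion over $\F_q$ in the characters $i\mapsto\alpha^{ie}$ that is supported exactly on the exponents $e\in C(l)$. The characters $\alpha^{ie}$, $e\in\mathbb Z_{q-1}$, are linearly independent over $\F_q$ as functions of $i$ (Vandermonde). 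Extending scalars to $\F_q$, the span of the image therefore equals the span of $\{(\alpha^{ie})_i: e\in C(l)\}$. Here we use that $a$ ranges over $\F_q$, so each coefficient $a^{p^s}$ can be nonzero. Equal images then force $C(l)=C(j)$. The exponent $q-1$ versus $0$ has to be separated using the coordinate $b=0$: $\calM_0$ gives a nonzero value at $0$ when $\Trqp(a)\ne0$, whereas $\calM_{q-1}$ always vanishes there, so $C(0)\ne C(q-1)$ is detected. The remaining care is in the bookkeeping of this $\F_q$-span argument.
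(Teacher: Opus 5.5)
Your proof is correct. Part (1) and the forward direction of (2) are essentially the paper's arguments, while the converse of (2) takes a genuinely different route.

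\textbf{Part (1).} The paper restricts to the coordinates in $\F_q^*$ and computes the rank through the transpose, $\dim\langle\phi_{\alpha^{lj}}\rangle=\dim_{\F_p}\langle(\alpha^l)^j\rangle_{\F_p}=\deg m_{\alpha^l}$, treating $l=0$ and $l=q-1$ separately. Your identification $\ker\psi_l=\K_l^{\perp}$ is the dual form of the same linear algebra. It is shorter, because it reuses Proposition~\ref{prop:tr_zero} and Lemma~\ref{cyclo_field} and needs no case split.

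\textbf{Forward direction of (2).} You obtain equality directly from $\psi_j(a)=\psi_l(a^{p^{r-t}})$. The paper instead proves an inclusion and then compares cardinalities.

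\textbf{Converse of (2).} The paper stays over $\F_p$ and uses annihilating polynomials. The shifts $x^i m_{\alpha^l}$ kill $(\Trqp(a\alpha^{il}))_i$, hence also $(\Trqp(b\alpha^{ij}))_i$. Proposition~\ref{prop:tr_zero} applied on the field $\K_j$ then forces $m_{\alpha^l}(\alpha^j)=0$. Your Fourier/Vandermonde argument is the dual picture: an annihilating polynomial of $\sum_e c_e\alpha^{ie}$ must vanish at every $\alpha^e$ with $c_e\neq0$. Your route makes the structure more transparent. Over $\F_q$, the image of $\calM_l$ sits inside the $\F_q$-span $W_{C(l)}$ of the vectors $(\alpha^{ie})_i$, $e\in C(l)$. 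The same independence of characters also explains the direct sum in Proposition~\ref{lin_indep}. Your route also separates the pair $\{0,q-1\}$ explicitly through the coordinate $b=0$. The paper's final step $m_{\alpha^j}=m_{\alpha^l}\Rightarrow j\in C(l)$ cannot detect this pair, since $m_{\alpha^0}=m_{\alpha^{q-1}}$, and it leaves that case implicit.

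\textbf{One imprecision to fix.} When $|C(l)|<r$, the terms of $\sum_s a^{p^s}\alpha^{ilp^s}$ collapse. The coefficient of $(\alpha^{ie})_i$ for $e=lp^s$ is $\Tr_{\F_q/\K_l}(a)^{p^s}$, not $a^{p^s}$, so some image vectors are zero rather than ``supported exactly'' on $C(l)$. Moreover, knowing that each coefficient can be nonzero separately would not by itself give equality of $\F_q$-spans.

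You do not need that equality. Three facts already force $C(l)=C(j)$:
\begin{enumerate}
\item the image lies in $W_{C(l)}$;
\item the image is nonzero, by part (1);
\item $W_{C(l)}\cap W_{C(j)}=\{0\}$ for disjoint cosets.
\end{enumerate}
If you want the equality anyway, it follows from part (1), because $\F_p$-linearly independent vectors remain $\F_q$-linearly independent, and $\dim_{\F_q}W_{C(l)}=|C(l)|$.
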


\begin{proof}

    For $l=0$, $\ev{\Trqp}(\calM_0)=\langle(1,\dots,1)\rangle$, so
    $$\dim_{\F_p}\ev{\Trqp}(\calM_0)=1=|C(0)|.$$
    Now, for $l \in \{ 1,\dots,q-1 \}$, since $\Trqp(ax^l)=0$ if $x=0$ we only need to investigate their evaluation in $\F_q^*$. For this purpose, let
    \begin{align*}
        \ev{\Trqp}^*: \F_q[x]^{<k} & \longrightarrow   \F_p^{q-1} \\
            f & \longmapsto (\Trqp(f(a_i)))_{0\leq i\leq q-2}
    \end{align*}
    We have
    \begin{align*}
    \dim_{\F_p} \ev{\Trqp}(\calM_l) &= \dim_{\F_p} \ev{\Trqp}^*(\calM_l)=\dim_{\F_p} \left \{ \left(\Trqp (a(\alpha^j)^l)\right)_{0\leq j\leq q-2} \mid\ a \in \F_q \right \}.
    \\
    &\overset{1}{=}\dim_{\F_p} \{ (\phi_{\alpha^{lj}}(a))_{0\leq j\leq q-2} \mid\ a \in \F_q  \} \overset{2}{=} \dim_{\F_p} \left\langle (\phi_{\alpha^{lj}}(a))_{a \in \F_q} \mid j=0,\dots, q-2 \right\rangle_{\F_p}\\&=\dim_{\F_p} \left\langle \phi_{\alpha^{lj}} \mid j=0,\dots, q-2\right\rangle_{\F_p}\overset{3}{=} \dim_{\F_p} \langle (\alpha^l)^j \mid j=0,\dots,q-2 \rangle_{\F_p}
    \end{align*}
    where Equality 1 follows from Theorem \ref{isom_Tr}, Equality 2 follows from the rank-preserving property of the transpose operator, and Equality 3 follows from the isomorphism in Theorem \ref{isom_Tr}.

    Let $m_{\alpha^l}$ be the minimal polynomial of $\alpha^l$ over $\F_p$, then for $l \neq q-1$, it holds that
    $$
    \dim_{\F_p} \langle (\alpha^l)^j \mid j=0,\dots,q-2 \rangle_{\F_p} = \deg(m_{\alpha^l})=|C(l)|.
    $$

    Therefore, we have 
    $$
    \dim_{\F_p} \ev{\Trqp}(\calM_l) = \deg m_{\alpha^l} = |C(l)|.
    $$
    For $l=q-1$, we have $\alpha^l=1$ and $\ev{\Trqp}(\calM_{q-1})=\langle(1,\dots,1,0)\rangle_{\F_p}$. Therefore 
    $$\dim_{\F_p} \ev{\Trqp}(\calM_{q-1})= 1=|C(q-1)|.$$
    
    We now prove the second assertion. Let $j \in C(l)$, then  $j=lp^i \mod (q-1)$ for some  $i \in \{ 0,\dots,r-2 \}$. For any $a\in \F_q$
    $$\Trqp(ab^l)=\Trqp\left( (ab^l)^{p^i} \right)=\Trqp(a^{p^i}b^j)$$
    for $b \in \F_q$ implying that
    $$\ev{\Trqp}(\calM_l) \subseteq \ev{\Trqp}(\calM_j).$$
    Since 
    $$|\ev{\Trqp}(\calM_l)|=p^{|C(l)|}=p^{|C(j)|}=|\ev{\Trqp}(\calM_j)|$$
    we have 
    $$\ev{\Trqp}(\calM_l)=\ev{\Trqp}(\calM_j).$$
    Conversely, suppose that if $\ev{\Trqp}(\calM_l)=\ev{\Trqp}(\calM_j)$ then $j\in C(l)$. Since $\ev{\Trqp}(\calM_l)=\ev{\Trqp}(\calM_j)$,  for any $a\in \F_q$ there exists $b \in \F_q$ such that 
    $$\ev{\Trqp}(ax^l) = \ev{\Trqp}(bx^j).$$
    Consider $a,b\in \F_q$ such that 
    \begin{equation}\label{eq:evnot_0}
        0\neq \ev{\Trqp}(ax^l)= \ev{\Trqp}(bx^j).
    \end{equation}
    and let $\alpha \in \F_q$ be a primitive element, then we have that for $i\in \{ 0,\dots,q-2 \}$
    \begin{equation}\label{eq:tr_j=tr_l}
        \Trqp(a(\alpha^i)^l) = \Trqp(b(\alpha^i)^j).
    \end{equation}
    Let $m_{\alpha^l} \in \F_p[x]$ be the minimal polynomial of $\alpha^l$ over $\F_p$, and define
    $$g_{l,i} = x^im_{\alpha^l} \mod (x^{q-1}-1).$$
    It holds that $g_{l,i}(\alpha^l)=0$. So, for all $i \in \{0,\dots,q-2\}$, if $g_{l.i}=\sum_{s=0}^{q-2} g_sx^s$, by the $\F_p$-linearity of $\Trqp$, we have
    $$
    0 = \Trqp(ag_{l,i}(\alpha^l)) = \Trqp\left(a \sum_{s=0}^{q-2}g_s(\alpha^l)^s\right)=\sum_{s=0}^{q-2}g_s \Trqp(a(\alpha^l)^s).
    $$
    In addition, with Equation \eqref{eq:tr_j=tr_l}, we have
    $$
    \sum_{s=0}^{q-2} g_s\Trqp(a(\alpha^l)^s)=\sum_{s=0}^{q-2}g_s\Trqp(b(\alpha^j)^s)=\Trqp\left(b\sum_{s=0}^{q-2}g_s(\alpha^j)^s\right)=\Trqp(bg_{l,i}(\alpha^j)).
    $$
    Therefore, we have
    $$
    0 = \Trqp(bg_{l,i}(\alpha^j)) = \Trqp(b(\alpha^j)^im_{\alpha^l}(\alpha^j)) ,
    $$
    Since $i \in \{ 0,\dots,q-2 \}$ is arbitrary, we have
    $$
    \Trqp(b(\alpha^j)^im_{\alpha^l}(\alpha^j) )=0,\  \forall i \in \{0,\dots, q-2 \}.
    $$
    This implies that $\Trqp(bm_{\alpha^l}(\alpha^j)x)$ is identically zero in $\K_j $. Since $\K_j$ is a field by Lemma \ref{cyclo_field}, and $m_{\alpha^l}(\alpha^j)\in \K_j$ is nonzero  then multiplying $\K_j$ by $m_{\alpha^l}(\alpha^j)$ is a bijection, implying that $\Trqp(bx)$ is also identically zero in $\K_j$. By Proposition \ref{prop:tr_zero}, this implies $\Trqp(bx^j)$ is identically zero in $\F_q$, which is a contradiction with Equation \eqref{eq:evnot_0}. So we have 
    $$m_{\alpha^l}(\alpha^j)=0 \implies m_{\alpha^j} \mid m_{\alpha^l}.$$
    Since $m_{\alpha^l}$ is irreducible in $\F_p[x]$, we have 
    $$m_{\alpha^j} = m_{\alpha^l} \implies j \in C(l).$$
\end{proof}

\begin{example}\label{ex:overest_M_4}
    Let  $p=3$, $q=9$ and $k=8$, we have that $C(4)=\{4\}$, $C(5)=C(7)=\{ 5,7 \}$. By Proposition \ref{size}, we have 
    $$\dim_{\F_3} \ev{\Trqp}(\calM_4)=1, \quad \text{and} \quad \dim_{\F_3} \ev{\Trqp}(\calM_5)=\dim_{\F_3} \ev{\Trqp}(\calM_7)=2.$$
    Moreover, $\ev{\Trqp}(\calM_5) = \ev{\Trqp}(\calM_7)$ since
    $$\Tr_{\F_9/\F_3}(a\ell^5) = \Tr_{\F_9/\F_3}(a^3\ell^7),\ \forall a,\ell \in \F_9 \iff \Tr_{\F_9/\F_3}(ax^5) = \Tr_{\F_9/\F_3}(a^3x^7),\ \forall a \in \F_9.$$
\end{example}

To further characterize the  image $\ev{\Trqp}(\F_q[x]^{<k})$ we show that the images associated with distinct cyclotomic-coset representatives form a direct sum.    

\begin{proposition} \label{lin_indep}
    Let $L \subseteq \{ 0,\dots,k-1 \}$ such that $\{ C(l) \mid l \in L \}$ is a maximal size of distinct $p$-cyclotomic cosets modulo $(q-1)$. We have
    $$
    \ev{\Trqp}\left(\sum_{l \in L} \calM_l\right) = \bigoplus_{l \in L} \ev{\Trqp}(\calM_l).
    $$
\end{proposition}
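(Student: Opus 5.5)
The equality $\ev{\Trqp}\bigl(\sum_{l\in L}\calM_l\bigr)=\sum_{l\in L}\ev{\Trqp}(\calM_l)$ is immediate from the $\F_p$-linearity of $\ev{\Trqp}$. The content of the statement is that this sum is direct. I will prove that if $\sum_{l\in L}\ev{\Trqp}(a_lx^l)=0$ for some $a_l\in\F_q$, then every $\ev{\Trqp}(a_lx^l)$ vanishes. The plan is to pass to the Fourier side and use the fact that distinct cosets give disjoint sets of exponents, so that polynomial identities force coefficientwise vanishing.

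First I handle the indices $0$ and $q-1$ (if $k=q$), since their images are $\langle(1,\dots,1)\rangle$ and $\langle(1,\dots,1,0)\rangle$. Evaluating at $0$ picks out only the constant term: every other monomial vanishes there, and $0^{q-1}=0$. So the contribution of $l=0$ is determined by the last coordinate, and I can subtract it and reduce to the case where only exponents $l\ge1$ appear.

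For the remaining exponents I rewrite the trace. For every $b\in\F_q^*$,
\[
0=\sum_{l\in L}\Trqp(a_lb^l)=\sum_{l\in L}\sum_{t=0}^{r-1}a_l^{p^t}b^{lp^t}.
\]
I reduce the exponents $lp^t$ modulo $q-1$, which keeps them inside $C(l)$. The sum then becomes a polynomial $h(x)=\sum_{s=0}^{q-2}c_sx^s$ of degree at most $q-2$ that vanishes on all of $\F_q^*$. For the index $q-1$, I note that $b^{q-1}=1$ on $\F_q^*$, so it merges with exponent $0$. I treat this degenerate overlap separately, using the $0$-coordinate as above; in that case $L$ cannot contain both $0$ and $q-1$ only if the cosets are genuinely distinct, which is the hypothesis. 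A nonzero polynomial of degree at most $q-2$ has at most $q-2$ roots, so $h$ vanishes on $q-1$ points only if $h=0$. Hence every coefficient $c_s$ is $0$.

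Because the cosets $C(l)$, $l\in L$, are pairwise disjoint, each exponent $s$ receives contributions from a single $l$. So for each $l$, the partial sum $\sum_t a_l^{p^t}x^{lp^t\bmod (q-1)}$ is itself the zero polynomial modulo $x^{q-1}-1$. This partial sum is exactly the function $b\mapsto\Trqp(a_lb^l)$ on $\F_q^*$, so it vanishes there. It also vanishes at $b=0$ for $l\ge1$. Therefore $\ev{\Trqp}(a_lx^l)=0$ for each $l$, and the sum is direct.

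The main obstacle is the bookkeeping when $|C(l)|<r$. In that case the exponents $lp^t$ repeat, and the coefficient of $x^{s}$ is a sum $\sum_{t\equiv t_0} a_l^{p^t}$ rather than a single power of $a_l$. The argument above sidesteps this entirely: I never need the individual coefficients to be zero, only the whole $l$-block, and that follows from the disjointness of the exponent supports. The other point needing care is the special treatment of $l=q-1$ versus $l=0$, which I resolve through the extra evaluation point $0$.
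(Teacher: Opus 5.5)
Your proof is correct, and it takes a different route from the paper's. Both arguments first eliminate the $l=0$ term using the coordinate at $0$. After that, the paper argues by contradiction with annihilating polynomials. For each $l\in L\setminus\{0\}$ it takes the $\F_p$-linear combination of the coordinate equations whose coefficients are those of $x^s\prod_{j\neq l}m_{\alpha^j}$, reduced modulo $x^{q-1}-1$. This kills every block except the $l$-th, because the minimal polynomials $m_{\alpha^j}$ are distinct irreducibles. It then uses that $\K_l$ is a field, together with Proposition~\ref{prop:tr_zero}, to conclude $\Trqp(a_lb^l)=0$ for all $b$.

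You instead expand $\Trqp(a_lb^l)=\sum_t a_l^{p^t}b^{lp^t}$ and reduce exponents modulo $q-1$. You then use that a function on $\F_q^*$ has a unique representative of degree at most $q-2$. Disjointness of the exponent supports ($C(l)$, and $\{0\}$ for $l=q-1$) then splits the relation block by block.

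Your route has several advantages:
\begin{itemize}
\item It is shorter and direct rather than by contradiction.
\item It needs neither minimal polynomials nor the subfields $\K_l$.
\item It handles $|C(l)|<r$ with no extra work.
\item It yields more: the coefficient of $x^{lp^{t_0}}$ in the $l$-block is $\Tr_{\F_q/\K_l}(a_l)^{p^{t_0}}$, so the same computation essentially recovers Proposition~\ref{size}(1) and Lemma~\ref{lemma:ker(evtr)}.
\end{itemize}
The paper's route stays inside $\F_p$-linear algebra on the evaluation vectors. It also reuses the minimal-polynomial machinery from the proof of Proposition~\ref{size}(2), which ties it to the subfield viewpoint exploited later.

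One expository point: your sentence about $L$ containing both $0$ and $q-1$ is garbled and should be removed. When $k=q$, maximality in fact forces both $0$ and $q-1$ into $L$, since $C(q-1)=\{q-1\}$ is declared a separate coset. Nothing in your argument depends on that sentence. Once the $l=0$ term has been eliminated via the $0$-coordinate, $l=q-1$ is the only remaining contributor to exponent $0$ on $\F_q^*$.
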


\begin{proof}
    Since, for all $a \in \F_q$, $\Trqp(ax^l)$ is identically zero when evaluated at $0$ for all $l \in L \setminus\{ 0 \}$ and $\ev{\Trqp}(\calM_0)=\langle(1,\dots,1)\rangle_{\F_p}$, it holds that 
    $$
    \ev{\Trqp}\left(\sum_{l \in L} \calM_l\right) = \ev{\Trqp}(\calM_0) \oplus \ev{\Trqp}\left(\sum_{l \in L\setminus \{ 0 \}}\calM_l\right).
    $$ 
    Hence, with $L'=L \setminus\{ 0 \}$, it remains to prove that
    $$
    \ev{\Trqp}\left(\sum_{l \in L'} \calM_l\right) = \bigoplus_{l \in L'} \ev{\Trqp}(\calM_l).
    $$
    Let $f=\sum_{l\in L'}f_lx^l\in \sum_{l \in L'} \calM_l$, by the $\F_p$-linearity of the trace function, it holds that for any $b \in \F_q$
    \[\Trqp(f(b))=\Trqp\left(\sum_{l\in L'}f_lb^l\right)=\sum_{l\in L'}\Trqp(f_lb^l), \]
    implying the decomposition of the vector $\ev{\Trqp} (f)$ in sum of vectors of $\ev{\Trqp}(\calM_l)$.

    Now we need to prove that this decomposition is unique by showing that for all $j\in L'$,  \[\ev{\Trqp}(\calM_j) \cap \ev{\Trqp}\left(\sum_{l\in L'\setminus\{j\}}\calM_l\right)=\{0\}.\]
    
    For contradiction, suppose that the intersection is nontrivial for a $j \in L'$. This is equivalent to the existence of 
    \begin{equation}\label{eq:contradiction_intersection}\Trqp\left(a_l(\alpha^l)^i\right)\neq 0\end{equation} 
    for some $l\in L'$ and $i\in \{0,\dots,q-2\}$. Indeed, for the nontrivial intersection, there exists $a_j \in \F_q^*$ and $a_l \in \F_q$ such that
    $$\ev{\Trqp}(-a_jx^j) = \ev{\Trqp}\left(\sum_{l \in L'\setminus\{j\}} a_lx^l\right) = \sum_{l \in L'\setminus\{j\}} \ev{\Trqp}(a_lx^l)\neq 0.$$
    with some $\ev{\Trqp}(a_lx^l)$ not zero. Then 
    $$ \sum_{l \in L'\setminus\{j\}} \ev{\Trqp}(a_lx^l)-\ev{\Trqp}(-a_jx^j)=\sum_{l \in L'}\ev{\Trqp}(a_l x^l)=0,$$
    Looking at each coordinate, it holds that for $i\in \{0,\dots,q-2\}$
    \begin{equation} \label{sum_tr_zero}
        \sum_{l \in L'}\Trqp\left(a_l(\alpha^i)^l\right)=\sum_{l \in L'}\Trqp\left(a_l(\alpha^l)^i\right)=0,
    \end{equation}
    where some $\Trqp\left(a_l(\alpha^l)^i\right)\neq 0$. 
    
    Now, let $m_{\alpha^l} \in \F_p[x]$ be the minimal polynomial of $\alpha^l$ over $\F_p$, and 
    $$
    m_l = \prod_{j \in L' \setminus \{ l \}} m_{\alpha^j} \mod (x^{q-1}-1)
    \neq 0,$$ from the definition of $L'$.
    By defining 
    $$
    f_{l,s} = x^sm_l \mod (x^{q-1}-1)
    $$
    we have that
    \begin{equation}\label{eq:flsalphaj}f_{l,s}(\alpha^j) = 0, \ \forall j \in L' \setminus\{l\}.\end{equation} 
    
    Therefore, for $l \in L'$ and $s \in \{ 0,\dots,q-2 \}$, if $f_{l,s}=\sum_{t=0}^{q-2}f_tx^t$, by $\F_p$-linearity of $\Trqp$ and Equation \eqref{sum_tr_zero}, we have
    $$\sum_{j \in L'} \Trqp(a_j f_{l,s}(\alpha^j)) = \sum_{j \in L'} \Trqp\left(a_j \sum_{t=0}^{q-2} f_t (\alpha^j)^t\right) = \sum_{t=0}^{q-2} f_t \left(\sum_{j \in L'} \Trqp(a_j(\alpha^j)^i) \right)=0.$$
    So, we have
    $$0=\sum_{j \in L'} \Trqp(a_j f_{l,s}(\alpha^j))=\Trqp(a_lf_{l,s}(\alpha^l)) = \Trqp(a_lm_l(\alpha^l)(\alpha^l)^s)$$
    where the second equality follows from Equation \eqref{eq:flsalphaj}. 
    
    Since this holds for all $l\in L'$, $\Trqp(a_lm_l(\alpha^l)b)=0$ for $b\in\K_l$. Now, for all $l \in L'$, since $m_l(\alpha^l) \neq 0$ and $\K_l$ is a field, multiplying $\K_l$ by $m_l(\alpha^l)$ is a bijection into $\K_l$, implying that $\Trqp(a_lb)=0$ for $b\in \K_l$. By Proposition \ref{prop:tr_zero}, this implies that, for all $l \in L'$, $\Trqp(a_lb^l)=0$ for $b\in\F_q$, which contradicts Equation \eqref{sum_tr_zero}.
    
\end{proof}

\begin{example} \label{lin_size_example}
    Let us consider $p=3$, $q=3^2=9$ and $k=8$. We have already seen in Example \ref{cyclo_example} that $$C(0)=\{0\},  C(1)=C(3)=\{ 1,3 \},  C(2)=C(6)=\{ 2,6 \}, C(4)=\{4\}, \text{ and }  C(5)=C(7)=\{ 5,7 \}.$$
    So we have 
    $$L=\{ 0,1,2,4,5 \},$$
    and 
    $$\ev{\Trqp}(\calM_0 + \calM_1 + \calM_2 + \calM_4 + \calM_5) = \bigoplus_{l\in\{ 0,1,2,4,5 \}}\ev{\Trqp}(\calM_l).$$
\end{example}

A direct consequence of Proposition \ref{size} and Proposition \ref{lin_indep} is the following Corollary that states the cardinality of $\chi_p \circ \ev{\Trqp}(\F_q[x]^{<k})$.

\begin{corollary} \label{cor:size_image}
    If $L \subseteq \{ 0,\dots,k-1 \}$ such that $\{ C(l) \mid l \in L \}$ is a maximal size of distinct $p$-cyclotomic cosets modulo $(q-1)$, then 
    $$|\chi_p \circ \ev{\Trqp}(\F_q[x]^{<k})| = p^{\sum_{l \in L} |C(l)|}.$$
\end{corollary}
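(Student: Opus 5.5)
The plan is to reduce the count to a dimension computation. Since $\chi_p$ acts entrywise and is a bijection $\F_p \to \mun{p}$, the induced map $\F_p^q \to \mun{p}^q$ is a bijection, so it suffices to show
\[
\dim_{\F_p} \ev{\Trqp}(\F_q[x]^{<k}) = \sum_{l\in L}|C(l)|.
\]
The cardinality of the image is then $p$ raised to this dimension. Here $\ev{\Trqp}$ is $\F_p$-linear, because evaluation is $\F_q$-linear and $\Trqp$ is $\F_p$-linear, so its image is an $\F_p$-subspace.

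First I would decompose the domain as $\F_q[x]^{<k} = \sum_{j=0}^{k-1} \calM_j$. By linearity this gives $\ev{\Trqp}(\F_q[x]^{<k}) = \sum_{j=0}^{k-1} \ev{\Trqp}(\calM_j)$.

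Next I would use the maximality of $L$. Every $j\in\{0,\dots,k-1\}$ has $C(j)=C(l)$ for some $l\in L$, since otherwise $C(j)$ could be added to the family $\{C(l)\mid l\in L\}$. Because $j\in C(l)$, Proposition~\ref{size}(\ref{TM_l=TM_j}) gives $\ev{\Trqp}(\calM_j)=\ev{\Trqp}(\calM_l)$. Hence the sum collapses to $\sum_{l\in L}\ev{\Trqp}(\calM_l)$, which equals $\ev{\Trqp}\bigl(\sum_{l\in L}\calM_l\bigr)$.

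Proposition~\ref{lin_indep} says this sum is direct. Proposition~\ref{size}(\ref{dimTM_l}) gives each summand dimension $|C(l)|$. Adding these gives the claimed dimension, and the formula follows.

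The only point needing care is the index $q-1$, which can occur when $k=q$. There $C(q-1)=\{q-1\}$ is treated separately, since $q-1\equiv 0 \pmod{q-1}$ yet $C(q-1)\neq C(0)$. One should check that it counts as its own class in $L$ and does not coincide with $C(0)$. This is consistent with the paper's convention and with Proposition~\ref{size}, because $\ev{\Trqp}(\calM_{q-1})=\langle(1,\dots,1,0)\rangle$ differs from $\ev{\Trqp}(\calM_0)=\langle(1,\dots,1)\rangle$. Apart from this bookkeeping, the argument is a direct assembly of the two propositions.
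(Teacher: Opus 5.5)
Your proposal is correct and follows essentially the same route as the paper: use the bijectivity of the entrywise $\chi_p$ to reduce to $|\ev{\Trqp}(\F_q[x]^{<k})|$, then combine Proposition~\ref{size} and Proposition~\ref{lin_indep} to obtain dimension $\sum_{l\in L}|C(l)|$. Your version is more explicit than the paper's: you spell out the reduction $\ev{\Trqp}(\F_q[x]^{<k})=\ev{\Trqp}\bigl(\sum_{l\in L}\calM_l\bigr)$ via the maximality of $L$ and Proposition~\ref{size}(\ref{TM_l=TM_j}), a step the paper leaves implicit.
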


\begin{proof}
    Since $\chi_p:\F_p^q \to \mun{p}^q$ is injective, we have 
    $$
    |\chi_p \circ \ev{\Trqp}(\F_q[x]^{<k})| = |\ev{\Trqp}(\F_q[x]^{<k})|.
    $$
    However, by Proposition \ref{size} and Proposition \ref{lin_indep}, we have
    $$
    \dim_{\F_p} (\operatorname{Im} \ev{\Trqp}) = \sum_{l \in L} |C(l)| \implies |\ev{\Trqp}(\F_q[x]^{<k})|= p^{\sum_{l \in L} |C(l)|}.
    $$
\end{proof}

\begin{example}
    From Example \ref{lin_size_example}, we have 
    $$|C(0)|=1 \implies \dim_{\F_3} \ev{\Trqp}(\calM_0)=1,\quad |C(1)|=2 \implies \dim_{\F_3} \ev{\Trqp}(\calM_1)=2,$$ 
    $$|C(2)|=2 \implies \dim_{\F_3} \ev{\Trqp}(\calM_2)=2, \quad |C(4)|=1 \implies \dim_{\F_3} \ev{\Trqp}(\calM_4)=1,$$
    and 
    $$\quad |C(5)|=2 \implies \dim_{\F_3} \ev{\Trqp}(\calM_5)=2.$$
    So
    $$
    \dim_{\F_3} \operatorname{Im} \ev{\Trqp} = 1 +2 +2 +1+2 = 8.
    $$
\end{example}

\section{A Polynomial Family for the Trace Evaluation Map}\label{sec:trace-polynomials}
In this section, for $k\leq q$, we construct a set $\calM \subset \F_q[x]^{<k}$ for which 

\begin{itemize}
    \item $|\calM| = |\chi_p \circ \ev{\Trqp} (\F_q[x]^{<k})|$, and
    \item $\chi_p \circ \ev{\Trqp} (\calM)=\chi_p \circ \ev{\Trqp} (\F_q[x]^{<k})$.
\end{itemize}  

Since the extension of $\chi_p$ to $\F_p^q$ is bijective from $\F_p^q$ to $\mun{p}^q$, it is enough to  consider $\ev{\Trqp}$. 

It is not the first time that the trace function has been used in codes, notable examples are Trace codes contained in \cite{galindo2018classical} and \cite{guneri2024subfield}. Furthermore, evaluations followed by additive characters have also appeared in \cite{tang2017linear}.

It is also worth mentioning that for a Reed--Solomon code $\calC$ obtained by evaluating polynomials in
$\mathbb{F}_{q}[x]^{<k}$ at all elements of $\mathbb{F}_{q}$, the trace code
$\operatorname{Tr}_{\mathbb{F}_{q}/\mathbb{F}_{p}}(\calC)$, which is isometric with respect to the Hamming distance to $ev_{\Trqp}(\F_q[x]^{<k})$, admits a particularly
explicit description. In our setting, we determine its exact cardinality and
characterize the full set of polynomials $\mathcal{M} \subset \mathbb{F}_{q}[x]^{<k}$
for which $ev_{\Trqp}$ induce a one-to-one
correspondence between $\mathcal{M}$ and $ev_{\Trqp}(\F_q[x]^{<k})$.
Although this is not the main focus of the paper, these observations are
conceptually useful: combined with Delsarte's theorem (Theorem 2 \cite{delsarte1975subfield}),
\[
C_{\mid \mathbb{F}_{p}}
    = \bigl(\operatorname{Tr}_{\mathbb{F}_{q}/\mathbb{F}_{p}}(C^{\perp})\bigr)^{\perp},
\]
they provide an explicit polynomial-level understanding of the trace code
appearing in the characterization of the subfield subcode $\calC_{\mid \mathbb{F}_{p}}$. 

We first record a useful consequence of the structure of $\K_l$ for $l \in L$ where $L \subseteq \{ 0,\dots,k-1 \}$ such that $\{ C(l) \mid l \in L \}$ is a maximal size of distinct $p$-cyclotomic cosets modulo $(q-1)$ such as in Section \ref{sec:cardinality}.

We also need the following lemma that fully characterizes $\operatorname{ker}(\ev{\Trqp}|_{ \calM_l})$.

\begin{lemma}\label{lemma:ker(evtr)}
    For $l \in L$, we have that
     $$\left \{ cx^l \mid c \in \operatorname{ker}(\Tr_{\F_q/\K_l}) \right \} = \operatorname{ker}(\ev{\Trqp}|_{ \calM_l}).$$
\end{lemma}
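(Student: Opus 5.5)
The plan is to reduce the statement to Proposition~\ref{prop:tr_zero} together with the transitivity of the trace in the tower $\F_p\subseteq\K_l\subseteq\F_q$. First, $\calM_l\to\F_p^q$, $cx^l\mapsto\ev{\Trqp}(cx^l)$, is $\F_p$-linear. So $cx^l$ lies in $\operatorname{ker}(\ev{\Trqp}|_{\calM_l})$ exactly when $\Trqp(cb^l)=0$ for all $b\in\F_q$. The coordinate at $b=0$ is harmless: it is $\Trqp(c\cdot 0)=0$ for $l\ge 1$, and for $l=0$ we have $\K_0=\F_p$, which is handled by the same argument. By Proposition~\ref{prop:tr_zero}, this condition is equivalent to
\[
\Trqp(cd)=0 \quad \text{for all } d\in\K_l .
\]
So it suffices to show that, for $c\in\F_q$,
\[
\Trqp(cd)=0\ \ \forall d\in\K_l \iff \Tr_{\F_q/\K_l}(c)=0 .
\]

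For this I use that $\K_l$ is a subfield by Lemma~\ref{cyclo_field}, so transitivity of the trace gives $\Trqp=\Tr_{\K_l/\F_p}\circ\Tr_{\F_q/\K_l}$. Since $\Tr_{\F_q/\K_l}$ is $\K_l$-linear, for $d\in\K_l$ we get
\[
\Trqp(cd)=\Tr_{\K_l/\F_p}\bigl(d\,\Tr_{\F_q/\K_l}(c)\bigr).
\]
If $\Tr_{\F_q/\K_l}(c)=0$, the right-hand side vanishes for every $d$. Conversely, suppose $e:=\Tr_{\F_q/\K_l}(c)\neq 0$. Then $d\mapsto de$ is a bijection of $\K_l$, so $\Tr_{\K_l/\F_p}(de)$ ranges over all of $\Tr_{\K_l/\F_p}(\K_l)=\F_p$, using surjectivity of the trace of $\K_l$ over $\F_p$. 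Hence it is nonzero for some $d$. This proves the equivalence, and therefore both inclusions of the lemma.

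The main point needing care is that transitivity of the trace and surjectivity of $\Tr_{\K_l/\F_p}$ are standard facts (Lidl--Niederreiter, Theorem~2.26 and Theorem~2.23) not stated earlier in the paper, so I would cite them. Alternatively, transitivity can be checked directly: $\Tr_{\F_q/\K_l}(c)=\sum_{i}c^{p^{ti}}$ with $|\K_l|=p^t$, and composing this with $\Tr_{\K_l/\F_p}$ recovers the full sum $\sum_{j=0}^{r-1}c^{p^j}$. The hypothesis $l\in L$ plays no real role beyond making $\K_l$ well defined, and the degenerate cases $l=0$ and $l=q-1$, both with $\K_l=\F_p$, need only a one-line check.
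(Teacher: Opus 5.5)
Your proof is correct. It reaches the harder inclusion $\operatorname{ker}(\ev{\Trqp}|_{\calM_l})\subseteq\{cx^l\mid c\in\operatorname{ker}(\Tr_{\F_q/\K_l})\}$ by a different route than the paper.

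The easy inclusion is the same in both. Each uses transitivity $\Trqp=\Tr_{\K_l/\F_p}\circ\Tr_{\F_q/\K_l}$ and $\K_l$-linearity of the relative trace. The only difference is that the paper uses $b^l\in\K_l$ directly instead of going through Proposition~\ref{prop:tr_zero}.

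For the converse, the paper does not argue element-wise; it counts. Rank--nullity and Proposition~\ref{size} give $|\operatorname{ker}(\ev{\Trqp}|_{\calM_l})|=p^{r-|C(l)|}=|\K_l|^{d-1}$, where $r=|C(l)|d$. The paper compares this with $|\operatorname{ker}(\Tr_{\F_q/\K_l})|$, a step that tacitly uses surjectivity of $\Tr_{\F_q/\K_l}$.

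You instead exhibit a witness. If $\Tr_{\F_q/\K_l}(c)\neq 0$, surjectivity of $\Tr_{\K_l/\F_p}$ gives $d\in\K_l$ with $\Trqp(cd)\neq 0$. Proposition~\ref{prop:tr_zero} then turns this into some $b\in\F_q$ with $\Trqp(cb^l)\neq 0$.

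The paper's version is shorter once the dimension formula is available. Yours does not depend on Proposition~\ref{size} at all, and in fact reverses the dependence. Combined with surjectivity of $\Tr_{\F_q/\K_l}$ and $|\K_l|=p^{|C(l)|}$ (Lemma~\ref{cyclo_field}), it yields $\dim_{\F_p}\ev{\Trqp}(\calM_l)=|C(l)|$ without the transpose/rank argument.

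A minor point on your closing remark: $\K_l$ is defined for every $l\in\{0,\dots,q-1\}$, so membership in $L$ is not needed even for that. Your argument proves the identity for all such $l$.
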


\begin{proof}
    Let  $a \in \operatorname{ker}(\Tr_{\F_q/\K_l})$. For $b \in \F_q$, since $\Trqp=\Tr_{\K_l/\F_p} \circ \Tr_{\F_q/\K_l}$, we have
$$\Trqp(ab^l)=\Tr_{\K_l/\F_p}(\Tr_{\F_q/\K_l}(ab^l))=\Tr_{\K_l/\F_p}(b^l\Tr_{\F_q/\K_l}(a))=0,$$
where the second equality comes from the fact that $b^l \in \K_l$. Therefore, we have
$$\ev{\Trqp}(ax^l)=0.$$
This implies that
$$\left \{ cx^l \mid c \in \operatorname{ker}(\Tr_{\F_q/\K_l}) \right \} \subseteq \operatorname{ker}(\ev{\Trqp}|_{ \calM_l}).$$

For the equality, let us prove that these sets have the same cardinality. By the rank-nullity theorem and Proposition \ref{size}, we have
$$\dim_{\F_p}\operatorname{ker}(\ev{\Trqp}|_{\calM_l})=r-|C(l)| \implies|\operatorname{ker}(\ev{\Trqp}|_{\calM_l})| = p^{r-|C(l)|}.$$
    Since $|C(l)|$ divides $r$, let $d$ such that $r=|C(l)|d$. It holds that
    $$|\operatorname{ker}(\ev{\Trqp}|_{\calM_l})|=p^{|C(l)|d-|C(l)|}=(p^{|C(l)|})^{d-1}=|\K_l|^{d-1} =  |\operatorname{ker}(\Tr_{\F_q/\K_l})|,$$
    where the last equality comes from the rank-nullity Theorem on $\Tr_{\F_q/\K_l}:\F_q \to \K_l$. We then have 
    $$|\operatorname{ker}(\ev{\Trqp}|_{\calM_l})|=|\operatorname{ker}(\Tr_{\F_q/\K_l})|=\left| \left \{ bx^l \mid b \in \operatorname{ker}(\Tr_{\F_q/\K_l}) \right \} \right|.$$
\end{proof}

Now notice that from Example \ref{ex:overest_M_4}, $\calM_5$ maps injectively into $\ev{\Trqp}(\calM_5)$, and the same holds for $\calM_7$ and $\ev{\Trqp}(\calM_7)$; however, this is not the case for $\calM_4$ and $\ev{\Trqp}(\calM_4)$ since \[\dim_{\F_3}\ev{\Trqp}(\calM_4)=1<2=\dim_{\F_3}\calM_4.\] 
For $l \in L$, we give a subset of $\calM_l$ for which $\ev{\Trqp}$ restricted to the subset is injective.

\begin{proposition}\label{CP_refinement}
    Let $L \subseteq \{ 0,\dots,k-1 \}$ such that $\{ C(l) \mid l \in L \}$ is a maximal size of distinct $p$-cyclotomic cosets modulo $(q-1)$ and let $z\in \F_q$ such that $\Tr_{\F_q/\K_l}(z) \neq 0$. If
        $$
        \calM_l'=\{ ax^l \mid a \in \langle z \rangle_{\K_l} \},$$ 
        then it holds that
        \begin{itemize}
            \item $|\calM_l'|=|\ev{\Trqp}(\calM_l)| $, and
            \item $\ev{\Trqp}(\calM_l')= \ev{\Trqp}(\calM_l)$. 
        \end{itemize}
\end{proposition}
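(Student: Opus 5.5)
The plan is to show that $\ev{\Trqp}$ restricted to $\calM_l'$ is injective. Once that is established, both bullet points follow from a counting argument.

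\textbf{Cardinality of $\calM_l'$.} Since $z\neq 0$ (its relative trace is nonzero), the set $\langle z\rangle_{\K_l}=\{\lambda z\mid \lambda\in\K_l\}$ is a one-dimensional $\K_l$-subspace of $\F_q$. By Lemma \ref{cyclo_field} it therefore has cardinality $|\K_l|=p^{|C(l)|}$. The map $a\mapsto ax^l$ is injective, so $|\calM_l'|=p^{|C(l)|}$. By Proposition \ref{size}(\ref{dimTM_l}) we also have $|\ev{\Trqp}(\calM_l)|=p^{|C(l)|}$. This proves the first bullet.

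\textbf{Injectivity.} Note that $\calM_l'$ is an $\F_p$-subspace of $\calM_l$, because $\F_p\subseteq\K_l$, and $\ev{\Trqp}$ is $\F_p$-linear. It therefore suffices to show that the kernel of $\ev{\Trqp}$ on $\calM_l'$ is trivial. Suppose $\ev{\Trqp}(\lambda z x^l)=0$ for some $\lambda\in\K_l$. By Lemma \ref{lemma:ker(evtr)}, this means $\lambda z\in\ker(\Tr_{\F_q/\K_l})$. Since $\Tr_{\F_q/\K_l}$ is $\K_l$-linear and $\lambda\in\K_l$,
\[
0=\Tr_{\F_q/\K_l}(\lambda z)=\lambda\,\Tr_{\F_q/\K_l}(z).
\]
Because $\Tr_{\F_q/\K_l}(z)\neq 0$ and $\K_l$ is a field, this forces $\lambda=0$. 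Hence $\ev{\Trqp}|_{\calM_l'}$ is injective.

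\textbf{Equality of images.} Since $\calM_l'\subseteq\calM_l$, we have $\ev{\Trqp}(\calM_l')\subseteq\ev{\Trqp}(\calM_l)$. By injectivity, the left-hand side has $|\calM_l'|=p^{|C(l)|}$ elements, which matches the cardinality of the right-hand side. Both sets are finite, so they are equal, proving the second bullet.

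The only delicate point I expect is in the injectivity step: one must use that the relative trace is $\K_l$-linear, so that the scalar $\lambda$ can be pulled out. This requires $\K_l$ to be a genuine subfield of $\F_q$ with $\F_q/\K_l$ an extension, which is guaranteed by Lemma \ref{cyclo_field}. The identification of the kernel itself is handled by Lemma \ref{lemma:ker(evtr)}. The degenerate case $l=q-1$ (if $q-1\in L$) gives $\K_l=\F_p$ and falls under the same argument without change.
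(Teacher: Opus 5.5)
Your proof is correct and follows the paper's argument essentially step for step. You count $|\calM_l'| = |\K_l| = p^{|C(l)|}$ against Proposition \ref{size}, show the kernel of $\ev{\Trqp}$ on $\calM_l'$ is trivial via Lemma \ref{lemma:ker(evtr)} and the $\K_l$-linearity of $\Tr_{\F_q/\K_l}$, and you additionally spell out the counting step from injectivity to equality of images, which the paper leaves implicit.
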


\begin{proof}
It holds that 
$$|\calM_l'| = |\K_l| = p^{|C(l)|} = |\ev{\Trqp} (\calM_l)|,$$
where the last equality comes from Proposition \ref{size}. 

To prove $\ev{\Trqp}(\calM_l')= \ev{\Trqp}(\calM_l)$ it  suffices to prove that $\ev{\Trqp}$ restricted to ${\calM_l'}$ is injective by proving that $\operatorname{ker}(\ev{\Trqp}|_{\calM'_l})=\{0\}$. For this, let $ax^l \in \operatorname{ker}(\ev{\Trqp}|_{\calM'_l})$. We prove that $a=0$. By Lemma \ref{lemma:ker(evtr)}, we have $a \in \operatorname{ker}(\Tr_{\F_q/\K_l}).$
Moreover, since $a \in \langle z \rangle_{\K_l}$, there exists $b \in \K_l$ such that $a=bz$. Given that  $\Tr_{\F_q/\K_l}(z)\neq 0$, we have
$$0=\Tr_{\F_q/\K_l}(a)=\Tr_{\F_q/\K_l}(bz)=b\Tr_{\F_q/\K_l}(z) \implies b=0,$$ 
and therefore, $a=0$.
\end{proof}

\begin{example}
    Let us consider $p=3$ and $q=9$. 
    \begin{itemize}
        \item For $l=0$, we have $\K_0 = \langle 1 \rangle_{\F_3} = \F_3.$
        Since $\Tr_{\F_9/\F_3}(1) \neq 0$, we can consider $z=1$ and
        $$ \calM_0'=\langle 1 \rangle_{\F_3} =\F_3 .$$
        \item For $l=2$ we have 
        $$\gcd(l,q-1)=2 \implies \operatorname{ord}(\alpha^l)=\frac{8}{2}=4,$$
        and since $s=2$ is minimal such that $4 \mid (3^s-1)$ we have
        $$\K_2 = \F_{3^2}=\F_9.$$
        Therefore $\operatorname{ker}(\Tr_{F_9/\K_2})$ is trivial. So 
        $$
        \calM_2'=\{ ax^2 \mid a \in \langle 1 \rangle_{\F_9} = \F_9 \}=\calM_2.
        $$
        \item For $l=5$ we have
        $$\gcd(l,q-1)=1 \implies \operatorname{ord}(\alpha^5)=8 \implies \K_5=\F_9.$$
        Therefore $\operatorname{ker}(\Tr_{F_9/\K_5})$ is trivial. So
        $$\calM_5' = \{ ax^5 \mid a \in \langle 1 \rangle_{\F_9} = \F_9 \}=\calM_5.$$
        \item For $l=4$ we have
        $$\gcd(l,q-1)=4 \implies \operatorname{ord}(\alpha^l)=\frac{8}{4}=2,$$
        and since $s=1$ is minimal such that $2 \mid (3^s-1)$, we have
        $$\K_4 = \F_{3^1}=\F_3.$$
        With $\Tr_{\F_9/\F_3}(1) \neq 0$, we can consider $z=1$. Thus
        $$
        \calM_4' = \{ ax^4 \mid a \in \F_3 \}
        $$
        \item For $l=8$, we have
        $$\K_8 = \langle 1 \rangle_{\F_3} = \F_3.$$
        Since $\Tr_{\F_9/\F_3} (1) \neq 0$, we can consider $z=1$ and 
        $$\calM_8' = \{ ax^8 \mid a \in \F_3\}$$
    \end{itemize}
\end{example}

The following corollary constructs $\calM$ and is a direct consequence of Corollary \ref{cor:size_image}, Proposition \ref{lin_indep}, Proposition \ref{CP_refinement} and the fact that $\chi_p:\F_p^q \to \mun{p}^q$ is bijective. 

\begin{corollary} \label{cor:M_prop}
    Let $L \subseteq \{ 0,\dots,k-1 \}$ such that $\{ C(l) \mid l \in L \}$ is a maximal size of distinct $p$-cyclotomic cosets modulo $(q-1)$ and let $\calM_l'$ as in Proposition \ref{CP_refinement}. With
    $\calM = \bigoplus_{l \in L} \calM_l',$
    the following holds
    \begin{enumerate}
        \item $|\calM| = |\chi_p \circ \ev{\Trqp} (\F_q[x]^{<k})|$ 
        \item $\chi_p \circ \ev{\Trqp} (\calM)=\chi_p \circ \ev{\Trqp} (\F_q[x]^{<k})$.
    \end{enumerate}
\end{corollary}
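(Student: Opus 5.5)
The plan is to reduce both claims to statements about $\ev{\Trqp}$, since $\chi_p:\F_p^q\to\mun{p}^q$ is a bijection. It then suffices to show that $\ev{\Trqp}$ restricted to $\calM=\bigoplus_{l\in L}\calM_l'$ is injective and has the same image as $\ev{\Trqp}(\F_q[x]^{<k})$. First I will check that the sum defining $\calM$ is direct inside $\F_q[x]^{<k}$. This holds because the $\calM_l'$ are supported on distinct monomial degrees $l\in L\subseteq\{0,\dots,k-1\}$. Consequently
$$|\calM|=\prod_{l\in L}|\calM_l'|=\prod_{l\in L}p^{|C(l)|},$$
using the first bullet of Proposition \ref{CP_refinement} together with Proposition \ref{size}. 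Corollary \ref{cor:size_image} gives the same number for $|\chi_p\circ\ev{\Trqp}(\F_q[x]^{<k})|$, which proves item (1).

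For item (2), the inclusion $\ev{\Trqp}(\calM)\subseteq\ev{\Trqp}(\F_q[x]^{<k})$ is trivial. For the reverse inclusion I will use $\F_p$-linearity. Since each $\calM_l'$ is an $\F_p$-subspace (indeed a $\K_l$-line) and $\ev{\Trqp}$ is additive,
$$\ev{\Trqp}(\calM)=\sum_{l\in L}\ev{\Trqp}(\calM_l')=\sum_{l\in L}\ev{\Trqp}(\calM_l),$$
where the second equality is the second bullet of Proposition \ref{CP_refinement}. Next, any $f\in\F_q[x]^{<k}$ decomposes as $\sum_{j<k}f_jx^j$, so $\ev{\Trqp}(f)=\sum_j\ev{\Trqp}(f_jx^j)$. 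By maximality of $L$, each $j<k$ has $C(j)=C(l)$ for some $l\in L$. Proposition \ref{size}(2) then gives $\ev{\Trqp}(\calM_j)=\ev{\Trqp}(\calM_l)$. Hence $\ev{\Trqp}(\F_q[x]^{<k})\subseteq\sum_{l\in L}\ev{\Trqp}(\calM_l)=\ev{\Trqp}(\calM)$.

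The only delicate point is the maximality hypothesis on $L$: it must mean that every coset $C(j)$ with $j<k$ is represented by some $l\in L$. The degree $j=q-1$ does not arise since $k\le q$, so the extended definition of $C(q-1)$ causes no trouble. Injectivity on $\calM$ then follows by counting, via Proposition \ref{lin_indep} and the equal cardinalities. Item (2) alone is the claim, however, so this step is only a consistency check.
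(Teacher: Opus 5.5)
Your proof is correct and follows the paper's argument. Item (1) compares $\prod_{l\in L}|\calM_l'|=\prod_{l\in L}p^{|C(l)|}$ with Corollary \ref{cor:size_image}, and item (2) reduces $\ev{\Trqp}(\F_q[x]^{<k})$ to $\sum_{l\in L}\ev{\Trqp}(\calM_l)$ via Proposition \ref{size} before swapping each $\calM_l$ for $\calM_l'$ by Proposition \ref{CP_refinement}; you also spell out that inclusion, and note that only the plain sum, not the directness from Proposition \ref{lin_indep}, is needed there. One slip: since $j\le k-1$, the degree $j=q-1$ \emph{does} occur when $k=q$, and then maximality forces $q-1\in L$ because $C(q-1)=\{q-1\}$; your argument still goes through, as Proposition \ref{size}(2) is trivial in that case.
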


\begin{proof}
    We have
    $$|\chi_p \circ \ev{\Trqp} (\F_q[x]^{<k})|=\left |\ev{\Trqp} (\F_q[x]^{<k}) \right|=\left |\ev{\Trqp} \left(\sum_{l \in L} \calM_l\right) \right|,$$
    where the last equality comes from  Corollary \ref{cor:size_image}. However, by Proposition \ref{lin_indep}, we have
    $$\left |\ev{\Trqp} \left(\sum_{l \in L} \calM_l\right) \right| = \prod_{l \in L}|\ev{\Trqp}\left(\calM_l\right)|.$$
    By Proposition \ref{CP_refinement}, we have 
    $$\prod_{l \in L}|\ev{\Trqp}(\calM_l)|=\prod_{l \in L}|(\calM_l')|=|\calM|.$$

    For the second part of the corollary, from Proposition \ref{size} and Proposition \ref{lin_indep}, we have
    $$\ev{\Trqp} (\F_q[x]^{<k})=\ev{\Trqp}\left(\sum_{l \in L}\calM_l\right) = \bigoplus_{l \in L} \ev{\Trqp}(\calM_l).$$
    By Proposition \ref{CP_refinement}, we have
    $$\bigoplus_{l \in L} \ev{\Trqp}(\calM_l)=\bigoplus_{l \in L} \ev{\Trqp}(\calM_l')=\ev{\Trqp}(\calM).$$
    Therefore, since $\chi_p$ is bijective, we have
    $$\chi_p \circ \ev{\Trqp} (\calM)=\chi_p \circ \ev{\Trqp} (\F_q[x]^{<k}).$$
    
\end{proof}

\section{Cardinality and Polynomial Family for CP codes}\label{sec:CP-polynomials}
Fix $p,r$ and $k$. Let $L \subseteq \{ 0,\dots,k-1 \}$ such that $\{ C(l) \mid l \in L \}$ is a maximal size of distinct $p$-cyclotomic cosets modulo $(q-1)$, and $\calM = \bigoplus_{l \in L} \calM_l'$ as in Section \ref{sec:trace-polynomials}. In this section, from a specific choice of $L$, we construct, from $\calM$, a set $\calF' \subset \calF$ for which $\calC(\calF') = \calC(\calF)$. With this, we then obtain $|\calC(\calF)|$, and, with Proposition \ref{CP_refinement}, we prove that the choice is actually independent of the choice of $L$, which need not be
unique. For instance, with $p=3,q=9$ and $k=8$, one can consider $L=\{ 0,1,2,4,7\}$ instead of $L=\{0,1,2,4,5\}$.

\begin{theorem}\label{CP:size&poly}
    Let $p=q^r$, $\chi$ a nontrivial additive character of $\F_q$, $k \in \N$ with $k \leq q$ and $m=q$. With $\calF$ and $\calC(\calF)$ as in Definition \ref{def:cp_code} and $L \subseteq \{ 0,\dots,k-1 \}$ such that $\{ C(l) \mid l \in L \}$ is a maximal size of distinct $p$-cyclotomic cosets modulo $(q-1)$, we have 
    $$|\calC(\calF)| = p^{\dim_{\F_p} \ev{\Trqp}(\F_q[x]^{<k})-1}=p^{\sum_{l \in L}|C(l)|-1}.$$
    Moreover, if $\calF' = \bigoplus_{l \in L \setminus\{0\}} \calM_l'$ where $\calM_l'$ be as in Proposition \ref{CP_refinement}, and $\langle \cdot \rangle_{\C}:\C^m \to G_{1,m}(\C^m)$, we have
    $$
    \langle \cdot \rangle_\C \circ\chi_p \circ \ev{\Trqp}(\calF')=\calC(\calF).
    $$
\end{theorem}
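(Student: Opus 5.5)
The proof has three steps. First, show that two polynomials give the same codeword exactly when their trace evaluations differ by a constant vector. Then use this to identify $\calC(\calF)$ with a quotient of $\ev{\Trqp}(\F_q[x]^{<k})$ by the constants. Finally, check that $\calF'$ parametrizes this quotient without redundancy.

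\textbf{Step 1: when two vectors span the same line.} Take $f,g\in\F_q[x]^{<k}$ and put $u=\chi_p(\ev{\Trqp}(f))$ and $v=\chi_p(\ev{\Trqp}(g))$. Both vectors have all entries in $\mun{p}$, so they are nonzero. If $\langle u\rangle_\C=\langle v\rangle_\C$, then $u=\lambda v$ for some $\lambda\in\C^*$. The last coordinate corresponds to $a_{q-1}=0$. Whenever $f(0)=g(0)=0$ (in particular for $f,g\in\calF$), that coordinate equals $1$ in both vectors, which forces $\lambda=1$. More generally, $\lambda=u_i/v_i\in\mun{p}$, so $\lambda=\chi_p(c)$ for some $c\in\F_p$. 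This gives $\ev{\Trqp}(f)-\ev{\Trqp}(g)=c(1,\dots,1)=\ev{\Trqp}(c')$ for a constant $c'\in\F_p$ with $\Trqp(c')=c$.

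\textbf{Step 2: counting codewords.} Consider the composite map $\F_q[x]^{<k}\to G_{1,q}(\C)$. By Step 1, its image is in bijection with
\[
\ev{\Trqp}(\F_q[x]^{<k})\big/\ev{\Trqp}(\calM_0),
\]
where $\ev{\Trqp}(\calM_0)=\langle(1,\dots,1)\rangle_{\F_p}$ has dimension one. Propositions \ref{size} and \ref{lin_indep} give the decomposition
\[
\ev{\Trqp}(\F_q[x]^{<k})=\ev{\Trqp}(\calM_0)\oplus\bigoplus_{l\in L\setminus\{0\}}\ev{\Trqp}(\calM_l).
\]
Note that $0\in L$ always, since $C(0)=\{0\}$ is only represented by $0$. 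Hence the quotient has size $p^{\sum_{l\in L}|C(l)|-1}$.

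It remains to compare the image of $\calF$ with the image of all of $\F_q[x]^{<k}$. Every polynomial in $\F_q[x]^{<k}$ differs from one in $\calF$ by a constant term plus monomials $x^i$ with $p\mid i$. Constants only change the spanned line by the scalar factor from Step 1. For a monomial with $p\mid i$, write $i=pj$. Then
\[
\Trqp(ab^{pj})=\Trqp\big(a^{1/p}b^{j}\big),
\]
so $ax^{pj}$ has the same trace evaluation as $a^{1/p}x^{j}$, where $j<i$ and $j\in C(i)$. Iterating this until the exponent is prime to $p$ (or zero), we get that the images of $\calF$ and of $\F_q[x]^{<k}$ give the same set of lines. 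Together with the count above, this yields the cardinality formula.

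\textbf{Step 3: the refined family $\calF'$.} Proposition \ref{CP_refinement} says each $\ev{\Trqp}|_{\calM_l'}$ is a bijection onto $\ev{\Trqp}(\calM_l)$. Combined with Proposition \ref{lin_indep}, this shows that $\ev{\Trqp}$ maps $\calF'=\bigoplus_{l\in L\setminus\{0\}}\calM_l'$ bijectively onto a complement of the constant vectors. Its image therefore meets every coset of the quotient exactly once. Hence $\langle\cdot\rangle_\C\circ\chi_p\circ\ev{\Trqp}(\calF')$ is the full set of lines, which equals $\calC(\calF)$ by Step 2. Moreover, $|\calF'|=|\calC(\calF)|$, so this parametrization is nonredundant.

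\textbf{Main obstacle: showing $\calF'\subseteq\calF$.} The delicate point is choosing the representatives $l\in L\setminus\{0\}$ prime to $p$, so that the monomials of $\calF'$ are actually allowed in $\calF$. I would pick the representatives as follows.

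If $C(l)$ meets $\{0,\dots,k-1\}$ at some element $l'$, divide $l'$ by $p$ repeatedly. Dividing by $p$ stays inside $C(l)$, since $l'/p\equiv l'p^{r-1}\pmod{q-1}$, and it decreases the value, so the result is still below $k$. This produces an element of $C(l)\cap\{0,\dots,k-1\}$ not divisible by $p$.

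The endpoint $q-1$ needs separate care. It arises only if $k=q$, in which case $C(q-1)=\{q-1\}$ is a separate coset. Since $p\nmid q-1$, the exponent $q-1$ is itself admissible.

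Granting this choice of representatives, $\calF'\subseteq\calF$. Finally, the independence of the result from the choice of $L$ follows from part (2) of Proposition \ref{size}.
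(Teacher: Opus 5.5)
Your proof is correct and shares the paper's skeleton. Both factor through $\ev{\Trqp}$, both note that two vectors span the same line exactly when their trace evaluations differ by an element of $\ev{\Trqp}(\calM_0)=\langle(1,\dots,1)\rangle$, and both use Propositions \ref{size}, \ref{lin_indep} and \ref{CP_refinement} to see that $\ev{\Trqp}(\calF')$ is a complement of that line.

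The genuine difference is how you pass from $\F_q[x]^{<k}$ to $\calF$. The paper works inside $\calM$ and proves injectivity on $\calM/\calM_0'$. It then reaches $\calC(\calF)$ only through the sandwich $\calF'\subseteq\calF\subseteq\F_q[x]^{<k}$. So for the paper, choosing representatives in $L$ that are prime to $p$ is logically indispensable, and it states this choice without detail. You instead use the Frobenius identity $\Trqp(ab^{pj})=\Trqp(a^{1/p}b^{j})$ to show directly that $\calF$ and $\F_q[x]^{<k}$ produce the same set of lines. You then count the quotient $\ev{\Trqp}(\F_q[x]^{<k})/\ev{\Trqp}(\calM_0)$.

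What this buys you is that both displayed claims hold for every admissible $L$. What you call the main obstacle is therefore not needed for the theorem as stated. It only serves the extra, desirable property $\calF'\subseteq\calF$, and your repeated division by $p$ supplies the justification the paper omits. Your observation that $\lambda=1$ whenever $f(0)=g(0)=0$ also shows neatly that, inside $\calF$, equal codewords means equal vectors.

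There is one small slip in Step 1: the constant $c'$ must be taken in $\F_q$, not $\F_p$. For $c'\in\F_p$ one has $\Trqp(c')=rc'$, which vanishes identically when $p\mid r$. Nothing downstream is affected, since you only use $\ev{\Trqp}(\calM_0)=\langle(1,\dots,1)\rangle$, and that follows from surjectivity of the trace on $\F_q$.
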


\begin{proof}
    Since $\chi_p$ is bijective and $\ev{\Trqp}(\calM) = \ev{\Trqp}(\F_q[x]^{<k})$, it suffices to consider polynomials in $\calM$. Moreover, since $\calC(\calF)$ is a collection of 1-dimensional vector spaces, it is enough to determine when 
    $\chi_p \circ \ev{\Trqp} (f)= (\chi(f(a_1)),\dots,\chi(f(a_m)))$ and $\chi_p \circ \ev{\Trqp} (g)=(\chi(g(a_1)),\dots,\chi(g(a_m)))$ are scalar multiple of each other for distinct polynomials $f,g\in \calM$. That is, suppose there exists $\lambda \in \mun{p}$ such that 
    $$\chi(f(a)) = \lambda \chi(g(a)),\ \forall a \in \F_q.$$
    Since $\chi_p : \F_p \to \mun{p}$ is bijective, there exists a unique $c \in \F_p$ such that $\lambda=\chi_p(c)$ and
    $$\chi_p \circ \Trqp(f(a)) = \chi_p(c)\chi_p\left(\Trqp(g(a)) \right) = \chi_p(c + \Trqp(g(a))),\ \forall a \in \F_q.$$
    Therefore, we have
    $$\Trqp(f(a)) = c + \Trqp(g(a)),\ \forall a \in \F_q \iff \Trqp\left( (f-g)(a) \right) = c,\  \forall a\in \F_q.$$
    That is 
    $$\ev{\Trqp}(f-g) \in \ev{\Trqp}(\calM_0')$$ 
    and, since $\ev{\Trqp}|_{\calM}$ is bijective, we have 
    $$f-g \in \calM_0'.$$
    Therefore, we have 
    $$f-g \in \calM_0' \iff \exists \lambda\in \mun{p}\ \text{ such that }\ \chi(f(a)) = \lambda \chi(g(a)),\ \forall a \in \F_q,$$
    meaning that $\langle \cdot \rangle_{\C} \circ\chi_p \circ \ev{\Trqp}$ restricted to $\calM/\calM_0'$ is injective. Hence
    $$
    \langle \cdot \rangle_{\C} \circ\chi_p \circ \ev{\Trqp}(\calM) = \langle \cdot \rangle_{\C} \circ \chi_p \circ \ev{\Trqp}(\calM/\calM_0') = \langle \cdot \rangle_{\C} \circ \chi_p \circ \ev{\Trqp}(\calF').
    $$
    By Corollary \ref{cor:M_prop} we have
    $$\langle \cdot \rangle_{\C} \circ \chi_p \circ \ev{\Trqp}(\F_q[x]^{<k})=\langle \cdot \rangle_{\C} \circ \chi_p \circ \ev{\Trqp}(\calM)=\langle \cdot \rangle_{\C} \circ \chi_p \circ \ev{\Trqp}(\calF').$$
    Now, given the definition of the $p$-cyclotomic cosets modulo $(q-1)$, one can choose every element of $L$ to be minimal so that $\gcd(l,p)=1$ for all $l \in L \setminus \{0\}$. In this situation, it then holds that 
    $$
    \calF'\subseteq\calF \subseteq \F_q[x]^{<k} \implies \langle \cdot \rangle_{\C} \circ\chi_p \circ \ev{\Trqp}(\calF') \subseteq \langle \cdot \rangle_{\C} \circ\chi_p \circ \ev{\Trqp}(\calF) \subseteq \langle \cdot \rangle_{\C} \circ \chi_p \circ \ev{\Trqp}(\F_q[x]^{<k}).
    $$
    Hence,
    $$ 
    \langle \cdot \rangle_{\C} \circ \chi_p \circ \ev{\Trqp}(\calF') = \langle \cdot \rangle_{\C} \circ \chi_p \circ \ev{\Trqp}(\calF) =\calC(\calF),
    $$
    and
    $$|C(\calF)| = |\langle \cdot \rangle_{\C} \circ\chi_p \circ \ev{\Trqp}(\calF')| = \left| \calF' \right| = p^{\sum_{l \in L\setminus\{0\}}|C(l)|}=p^{\sum_{l \in L} |C(l)|-1}.$$
    The proof then ends with the fact that $\ev{\Trqp}(\calF')$ is independent of the choice of $L$ and $0 \in L$.

\end{proof}

\begin{example}
    Let us consider $p=3,q=9$ and $k=8$. We have 
    $$\calM_1'=\{ ax \mid a \in \F_9 \}, \quad \calM_2'=\{ ax^2 \in \F_9 \}, \quad \calM_4' = \{ax^4 \mid a \in \F_3 \}, \quad{and} \quad \calM_5'=\{ ax^5 \mid a \in \F_9 \}$$
    So 
    $$
    \calF' = \calM_1' \oplus \calM_2' \oplus \calM_4' \oplus \calM_5', \quad \text{and} \quad |\calF'|=3^7=2187
    $$
    and
    $$\calC(\calF) = \langle \cdot \rangle_{\C} \circ\chi_p \circ \ev{\Trqp}(\calF')$$
\end{example}

\section*{Conclusion}

This paper gives an algebraic characterization of the redundancies in character-polynomial
(CP) codes. Motivated by explicit examples and simulations showing that the defining
polynomial family can be larger than the resulting set of codewords, we factor the
construction into a trace-evaluation map followed by an additive character map and
projectivization. This decomposition identifies the source of the cardinality loss and makes it possible to count the distinct codewords exactly.

A central component of our analysis is the use of $p$-cyclotomic cosets modulo $(q-1)$ and their connection to finite field extensions. This structure allowed us to determine the exact number of distinct trace-evaluation vectors produced by polynomials of degree less than $k$. From this characterization, we identified a subset $\mathcal{M} \subset \mathbb{F}_q[x]^{<k}$ that maps bijectively to the trace-evaluation image, thereby isolating the essential polynomial representatives.

Building on this foundation, we constructed the refined polynomial family $\mathcal{F}'$. This set is a central contribution of the paper: $\mathcal{F}'$ is precisely the collection of polynomials that generate the CP code without any redundancy, and each element of $\mathcal{F}'$ corresponds uniquely to a codeword in $\mathcal{C}(\mathcal{F})$. As a consequence, the cardinality of the CP code is exactly $|\mathcal{F}'|$, and our construction provides an explicit algebraic formula for this quantity. Thus, beyond determining the size of the CP code, we also identify the exact polynomial set that should be used in its construction. 

Together, these results yield a complete and minimal description of CP codes, clarifying both their internal algebraic structure and the precise relationship between polynomial families and codewords. The identification of $\mathcal{F}'$ as the minimal generating set provides a foundation for more efficient implementations, sharper parameter analyses, and potential extensions to broader classes of trace-based codes.

\bibliographystyle{plain}
\bibliography{biblio}

\end{document}